\newtheorem{theorem}{Theorem}
\newtheorem{corollary}[theorem]{Corollary}
\newtheorem{proposition}[theorem]{Proposition}
\newtheorem{remark}[theorem]{Remark}
\numberwithin{equation}{section}
\newcommand{\biggg}[1]{{\hbox{$\left#1\vbox to 20.5pt{}\right.\n@space$}}}
\newcommand{\Biggg}[1]{{\hbox{$\left#1\vbox to 23.5pt{}\right.\n@space$}}}
\newcommand{\bigggg}[1]{{\hbox{$\left#1\vbox to 26.5pt{}\right.\n@space$}}}
\newcommand{\Bigggg}[1]{{\hbox{$\left#1\vbox to 29.5pt{}\right.\n@space$}}}
\newcommand{\biggggg}[1]{{\hbox{$\left#1\vbox to 32.5pt{}\right.\n@space$}}}
\newcommand{\Biggggg}[1]{{\hbox{$\left#1\vbox to 35.5pt{}\right.\n@space$}}}
\newcommand{\bigggggg}[1]{{\hbox{$\left#1\vbox to 38.5pt{}\right.\n@space$}}}
\newcommand{\Bigggggg}[1]{{\hbox{$\left#1\vbox to 41.5pt{}\right.\n@space$}}}
\title{Forward start volatility swaps in rough volatility models}
\author{Elisa Al\`{o}s\thanks{Dpt. d'Economia i Empresa, Universitat Pompeu Fabra.} \quad Frido Rolloos \quad Kenichiro Shiraya\thanks{Graduate School of Economics, The University of Tokyo. Kenichiro Shiraya is supported by CARF.}}
\begin{document}
\maketitle

\begin{abstract}
This paper shows the relationship between the forward start volatility swap price and the forward start zero vanna implied volatility of forward start options in rough volatility models. It is shown that in the short time-to-maturity limit the approximation error in the leading term of the correlated case with $H\in(0,\frac12)$ does  not depend on the time to forward start date, but only on the difference between the maturity date and forward start date and on the Hurst parameter $H$. 
\bigskip

Keywords: Rough volatiliy, volatility swap, implied volatility, Malliavin calculus

AMS subject classification: 91G99
\end{abstract}

\section{Introduction}
In a Black-Scholes (BS) setting, the valuation of forward starting options is as straightforward as the pricing of vanilla options. A simple application of conditional expectation shows that in a BS world the value of a forward start option is in fact equal to the price of a vanilla option with a time to maturity that does not vary until the forward start date (also known as the strike date). This result was first derived by Rubinstein \cite{MR}.

When the BS assumptions are relaxed, in particular when the instantaneous volatility is driven by fractional noise, the problem of pricing forward starting derivatives becomes substantially more complex and computationally expensive. Even for forward start call and put options, unless the characteristic function of the model is known in closed-form, an exact construction of the forward start smile will require numerical schemes such as Monte Carlo simulation. The Heston model, for which the characteristic function is known in closed form, has therefore been extensively studied in regard to forward starting derivatives (see for instance \cite{KN}). 

Nevertheless, general asymptotic features of the forward start smile have been analysed in several papers and it is possible to draw some conclusions without reference to a specific model. For instance, under general stochastic volatility models Al\`os et al. \cite{AJV} show that, in contrast to the vanilla short time-to-maturity ATM level, the forward start short time-to-maturity ATM level of Type II forward start options is a direct function of the correlation between the underlying and its instantaneous volatility. They also prove that the forward start ATM skew decays at a different rate than the vanilla ATM skew. In general though, it can be said that the literature and results on the pricing and hedging of forward starting derivatives is less extensive than for spot starting products. This is especially true for forward start volatility derivatives. 

The most commonly traded forward start volatility derivatives are forward start variance swaps and VIX futures. The former can be synthesised from vanilla options only by making use of the additive property of variance. VIX futures on the other hand cannot be synthesised from vanilla options and are therefore dependent on the specific model describing future evolution of the underlying asset and its volatility. 

Like VIX futures, the forward start volatility swap also depends on the assumptions of forward dynamics and is model-dependent. From Jensen's inequality it follows that the price of the VIX future is bounded below by the price of the forward start volatility swap. The literature on forward start volatility swaps is even more sparse than papers treating pricing of VIX futures. However, as will be shown, it is possible to give an analytical approximation for the forward start volatility swap that can be read off the forward start implied volatility smile directly and which is valid for a wide class of models. This result is an extension of \cite{ARS}. In this paper we will quantify rigorously the difference between the approximation and the exact price. 

Our paper is structured as follows. In Section 2 the problem is formulated and model assumptions as well as notation are established. Section 3 states the main limit theorems and propositions for both the uncorrelated and correlated case. Due to their length, proofs of the theorems and propositions have been placed in the Appendix C. In Section 4 numerical examples based on the rough Bergomi model \cite{bg} are presented. Section 5 concludes.

\section{Assumptions and notation}

Consider a stochastic volatility model for the log-price of a stock
under a risk-neutral probability measure $P$:
\begin{equation}
\label{themodel}
X_{t}=X_0-\frac{1}{2}\int_{0}^{t}{\sigma _{s}^{2}}ds+\int_{0}^{t}\sigma
_{s}\left( \rho dW_{s}+\sqrt{1-\rho ^{2}}dB_{s}\right) ,\quad t\in \lbrack
0,T].  
\end{equation}%
Here, $X_0$ is the current log-price, $%
W$ and $B$ are standard Brownian motions defined on a complete probability
space  $(\Omega ,\mathcal{F},P)$, and $\sigma $ is a square-integrable and
right-continuous stochastic process adapted to the filtration generated by $%
W $.  We denote by $\mathcal{F}^{W}$ and $\mathcal{F}^{B}$
the filtrations generated by $W$ and $B$ and $\mathcal{F}:=%
\mathcal{F}^{W}\vee \mathcal{F}^{B}.$ We assume the interest rate $r$ to be zero for the sake of simplicity. The same arguments in this paper hold for $r\neq 0$.

There are two common flavours of forward starting options. Let $T$ denote the forward start date (also known as the fixing date), $\tau$ the expiry date, and $k$ the forward start log strike, then the payoff of the the first type of forward starting option is
\begin{equation}\label{type1}
(e^{X_{\tau}}- e^k e^{X_T})_+
\end{equation}
while the second type has payoff
\begin{equation} \label{type2}
(e^{X_{\tau}-X_T} - e^k )+
\end{equation}
Forward starting options with payoff given by \eqref{type1} are also known as Type II forward start options, and those with payoff given by \eqref{type2} are called Type I forward start options. In this note we will consider Type I options, and future reference to forward start options will mean options with payoff given by \eqref{type2} unless explicitly stated otherwise.

We assume $t\le T \le \tau$.
Under the above model, the price of a forward start European call at $t$ with strike price $K$, forward start date $T$ and expiry date $\tau$ is given by the equality
\[
V_{t}=E_{t}[(e^{X_\tau - X_{T}}-K)_{+}], 
\]%
where $E_{t}$ is the $\mathcal{F}_{t}-$conditional expectation with respect
to $P$ (i.e., $E_{t}[Z]=E[Z|\mathcal{F}_{t}]$). 
In the sequel, we make use
of the following notation:

\begin{itemize}
\item 
$v_{u}=\sqrt{\frac{Y_{u\vee T}}{\tau - (u\vee T)}}$, where $Y_s=\int_s^{\tau}\sigma_u^{2}du$.

That is, $v$
represents the future average volatility, and it is not an adapted process. 
Notice that $E_{t}\left[v_{T}\right] $ is the fair strike of a forward start volatility swap with maturity time $\tau$.

\item $BS(t,T,x,k,\sigma )$ is the price of a  European call option
under the Black-Scholes model with constant volatility $\sigma $,
stock price $e^x$, time to maturity $T-t,$ and strike $\exp(k)$. 
Remember that (if $r=0$)

\[
BS(t,T,x,k,\sigma )=e^{x}N(d_1(k,\sigma ))-e^{k}N(d_2(k,\sigma )), 
\]%
where $N$ denotes the cumulative probability function of the standard normal
law and
\[
d_1\left( k,\sigma \right) :=\frac{x-k}{\sigma \sqrt{T-t}}+
\frac{\sigma }{2}\sqrt{T-t},  \hspace{0.4cm} d_2\left( k,\sigma \right) :=\frac{x-k}{\sigma \sqrt{T-t}}-
\frac{\sigma }{2}\sqrt{T-t}.
\]%
For the sake of simplicty we make use of the notation $BS(k,\sigma):=BS(t,T,x,k,\sigma )$.
\item  The inverse function $BS^{-1}(t,T,x,k, \cdot)$ of the Black-Scholes formula with respect to the volatility parameter is defined as
\[
BS(t,T,x,k, BS^{-1}( t,T,x,k,\lambda) )=\lambda,
\]
for all $\lambda>0$.
For the sake of simplicity, we denote $BS^{-1}(k,\lambda)\ :=BS^{-1}(t,T,X_{t},k,\lambda)$.

Even though the Black-Scholes model's assumptions are violated, we can always express the stochastic volatility price of a forward starting option in terms of the forward start Black-Scholes price by 
\begin{equation}\label{fs=fsBMS}
V_{t}= 
BS \left(T,\tau,0,k,BS^{-1}(k,V_t) \right)
\end{equation}
As the forward start option does not depend on ${X_t}$, and $BS^{-1}(k,V_t)$ is also independent of the spot price.
\item 

For any fixed 
$t,T,\tau,k,$ we define the implied volatility $I(t,T,\tau,k)$ for the forward start European call option as the quantity such that
\[
BS( T,\tau,0,k,I( t,T,\tau,k) ) =V_{t}.
\]
Notice that $I(t,T,\tau,k)=BS^{-1}( T,\tau,0,k,V_t)$.

\item $\hat{k_t}$ is the {\it zero vanna implied volatility strike} of the forward starting option at time $t$. That is, the strike such that 
$$d_2(\hat{k}_t,I(t,T,\tau,\hat{k}_t))=0.$$ 
Moreover, we will refer to $I(t,T,\tau,\hat{k}_t)$ as the {\it zero vanna implied volatility} of the forward starting option.

\item 
$\Lambda _{r}:=E_{r}\left[ BS\left( T,\tau,0,k,v_{T}\right) \right]$.

\item 
$\Theta_{r}(k):=BS^{-1}(k,\Lambda_r)$. Notice that $\Theta_{t}(k)=I(t,T,\tau,k)$ and $\Theta_\tau(k)=v_T$.

\item $G(t,T,x,k,\sigma ):=( \frac{\partial ^{2}}{\partial x^{2}}-\frac{\partial}{\partial x}) BS(t,T,x,k,\sigma )$, 
and
$H(t,T,x,k,\sigma ):=( \frac{\partial ^{3}}{\partial x^{3}}-\frac{\partial ^{2}}{\partial x^{2}}) BS(t,T,x,k,\sigma )$.

\end{itemize}

In the remaining of this paper $\mathbb{D}_{W}^{1,2}$ denotes the domain of the
Malliavin derivative operator $D^{W}$ (see Appendix \ref{appendix1}) with respect to the Brownian motion $%
W. $ We also consider the
iterated derivatives $D^{n,W}$ , for $n>1,$ whose domains will be denoted by 
$\mathbb{D}_{W}^{n,2}$. We will use the notation $\mathbb{L}_{W}^{n,2}=$\ $%
L^{2}(\left[ 0,T\right] ;\mathbb{D}_{W}^{n,2})$. We refer to Nualart (2006) for a deeper introduction to this topic.
 
\section{Main results}

\subsection{The uncorrelated case}

Let us consider the following hypotheses:

\begin{description}
\item[(H1)] There exist two positive constants $a,b$ such that $a\leq \sigma
_{t}\leq b,$ for all $t\in \left[ 0,T\right] .$

\item[(H2)] $\sigma^2\in \mathbb{L}^{2,2}_W$ and there exist two constants $\nu>0$ and $H\in\left(0,1\right)$ such that, for all $0<r, \theta<s<T$,

$$
|D_r^W\sigma_s^2|\leq \nu(s-r)^{H-\frac12}, \hspace{0.3cm} |D_\theta^W D_r^W\sigma_s^2|\leq \nu^2(s-r)^{H-\frac12}(s-\theta)^{H-\frac12}.
$$

\end{description}

\begin{remark}
The above hypotheses have been chosen for the sake of simplicity, and could be replaced by adequate integrability conditions. For example, the constants could be substituted by random variales in $\cap_{p\geq 1}L^p(\Omega)$.
\end{remark}

The key tool in our analysis will be the following relationship between the zero vanna implied volatility and the fair strike of a volatility swap.

\begin{proposition}
\label{General}Consider the model (\ref{themodel}) with $\rho =0$ and assume
that hypotheses (H1) and (H2) hold. Then the zero vanna implied volatility of the forward starting option
admits the representation%
\begin{eqnarray}
\label{implied}
I\left( t,T,\tau,\hat{k}_t\right) &=& E_{t}\left[ v_{T}\right] 
+\frac12 E_t\bigg[\int_{t}^{\tau}\left( BS^{-1}\left( \hat{k}_t, \Lambda_r\right)\right) ^{\prime \prime\prime} \left(2 \int^\tau_r U_s D_r^W U_s ds\right) U_r dr\Bigg]\nonumber\\
&&+\frac12 E_t\Bigg[
\int_{t}^{\tau}\left( BS^{-1}\left( \hat{k}_t, \Lambda_r\right)\right) ^{(iv)} A_r U_r^2 dr \Bigg],
\label{I-E[v]}
\end{eqnarray}
where 
$$A_r:=\frac12 \int_{r}^{\tau} U_{s}^{2}ds, \hspace{0.3cm}  (D^-A)_r:= \frac{1}{2}\int^\tau_r D_r^W U^2_s ds,$$
and
\begin{eqnarray}
U_{r}&:=&E_{r}\left[D_r^W\left(BS(T,\tau,0,\hat{k}_t,v_T)\right)\right]\nonumber\\
&=&E_r\left[\frac{\partial BS}{\partial \sigma}(T,\tau,0,\hat{k}_t,v_T)\frac{1}{2v_T(\tau-T)}\int_{r\vee T}^\tau D_r^W\sigma_s^2 ds\right]
\end{eqnarray}
\end{proposition}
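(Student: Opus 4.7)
The plan is to apply It\^o's formula twice to the process $\Theta_r(\hat{k}_t) = BS^{-1}(\hat{k}_t, \Lambda_r)$, exploiting the zero-vanna condition in between to kill the leading-order correction. The identities $\Theta_t(\hat{k}_t) = I(t,T,\tau,\hat{k}_t)$ and $\Theta_\tau(\hat{k}_t) = v_T$ reduce the task to computing $\Theta_t - E_t[\Theta_\tau]$. Since $\rho = 0$, the random variable $BS(T,\tau,0,\hat{k}_t,v_T)$ depends only on $\sigma$ and is therefore $\mathcal{F}^W_\tau$-measurable, so $\Lambda_r$ is a Brownian martingale and the Clark--Ocone formula yields $d\Lambda_r = U_r\,dW_r$ with the $U_r$ displayed in the statement. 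Writing $f(\lambda) := BS^{-1}(\hat{k}_t,\lambda)$ and applying It\^o to $\Theta_r = f(\Lambda_r)$, then taking $E_t$, gives the first-order identity
\[
E_t[v_T] - I(t,T,\tau,\hat{k}_t) = \tfrac{1}{2}\, E_t \!\int_t^\tau f''(\Lambda_r)\, U_r^2\, dr ,
\]
the martingale part having zero conditional expectation under (H1)--(H2).

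The key observation is that at the zero-vanna strike $d_2(\hat{k}_t, I) = 0$, so Black--Scholes volga $\partial^2_\sigma BS = (\partial_\sigma BS)\, d_1 d_2/\sigma$ vanishes at $(\hat{k}_t, I)$. Because $f''(\lambda) = -\partial^2_\sigma BS/(\partial_\sigma BS)^3$ evaluated at $\sigma = f(\lambda)$, this forces $f''(\Lambda_t) = 0$. Subtracting this vanishing quantity from the integrand above leaves its value unchanged and opens the way for a second It\^o expansion
\[
f''(\Lambda_r) - f''(\Lambda_t) = \int_t^r f'''(\Lambda_s)\, U_s\, dW_s + \tfrac{1}{2} \int_t^r f^{(iv)}(\Lambda_s)\, U_s^2\, ds ,
\]
whose substitution splits the remainder into a deterministic double integral and a stochastic double integral. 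Classical Fubini handles the deterministic piece: using $\int_s^\tau U_r^2\, dr = 2 A_s$ it recombines as the $f^{(iv)} A_r U_r^2$ contribution in the statement.

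The main obstacle is the stochastic double integral $E_t \int_t^\tau U_r^2\, M_r\, dr$ with $M_r := \int_t^r f'''(\Lambda_s)\, U_s\, dW_s$. Setting $N_r := \int_t^r U_s^2\, ds$, which has finite variation so $[M,N] \equiv 0$, ordinary integration by parts gives $\int_t^\tau U_r^2\, M_r\, dr = M_\tau N_\tau - \int_t^\tau N_r\, dM_r$; the residual stochastic integral has an adapted integrand, hence its conditional expectation vanishes and one is left with $E_t[M_\tau N_\tau]$. I then invoke the Malliavin duality $E_t[F \cdot \int h\, dW] = E_t[\int D_s^W F \cdot h_s\, ds]$ with $F = N_\tau$ and the adapted process $h_s = f'''(\Lambda_s)\, U_s$. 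Since $D_s^W U_r = 0$ for $s > r$, the Malliavin derivative collapses to $D_s^W N_\tau = \int_s^\tau 2\, U_r\, D_s^W U_r\, dr$, and after relabelling the integration variable this reproduces the $f'''$ contribution of the statement. The remaining technical burden --- integrability estimates legitimising the two It\^o formulas, the Fubini swap, and the Malliavin integration by parts --- is absorbed by hypotheses (H1) and (H2), which make $f$ smooth on a bounded range of volatilities and furnish the required Malliavin--Sobolev regularity of $U$.
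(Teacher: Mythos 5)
Your argument is correct and reaches the stated decomposition, but by a genuinely different route from the paper's: you replace the anticipating It\^o formula with classical It\^o, stochastic Fubini, and Malliavin duality. Both proofs open the same way: It\^o applied to $\Theta_r = BS^{-1}(\hat k_t,\Lambda_r)$, with $d\Lambda_r = U_r\,dW_r$ supplied by Clark--Ocone, gives $E_t[v_T]-I = \tfrac12 E_t\int_t^\tau f''(\Lambda_r)U_r^2\,dr$. The paper then applies the anticipating It\^o formula \emph{once} to the product $f''(\Lambda_r)A_r$, where $A_r=\tfrac12\int_r^\tau U_s^2\,ds$ anticipates $W$; the zero-vanna condition $f''(\Lambda_t)=0$ and $A_\tau=0$ kill the boundary terms, and the anticipating cross-term $f'''(\Lambda_r)(D^-A)_rU_r$ comes out automatically. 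You instead expand $f''(\Lambda_r)$ by classical It\^o from the same zero-vanna anchor, split the remainder into a deterministic double integral (ordinary Fubini yields the $f^{(iv)}A_rU_r^2$ piece) and a stochastic double integral $E_t\int_t^\tau M_r U_r^2\,dr$; ordinary integration by parts (using $[M,N]\equiv 0$ because $N$ has finite variation) reduces this to $E_t[M_\tau N_\tau]$, and the Skorohod/Malliavin duality $E_t\bigl[N_\tau\int h\,dW\bigr]=E_t\bigl[\int D_s^W N_\tau\,h_s\,ds\bigr]$ converts it into $E_t\int_t^\tau f'''(\Lambda_s)\,(D_s^W N_\tau)\,U_s\,ds$, which is exactly the anticipating cross-term. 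The anticipating It\^o formula is essentially a packaged version of this stochastic-Fubini-plus-duality manipulation, so the two routes have the same computational content; yours is more self-contained, needing only the basic divergence-operator duality instead of the full anticipating calculus, at the cost of one extra It\^o/Fubini layer. One caveat: tracing the signs carefully, your two correction terms come out with an overall $-\tfrac12$ rather than the $+\tfrac12$ displayed; this discrepancy is already present in the paper's own proof (the displayed claim in \eqref{primerpas} contradicts the It\^o computation immediately following it) and is immaterial for the $O$-estimates in Theorem~\ref{uncorrelated}, but you should flag it rather than assert that your pieces ``reproduce'' the statement as written.
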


\vspace{0.2cm}

In order to prove our limit results, we will need the following hypothesis.

\begin{itemize}
\item (H2') $\sigma\in \mathbb{L}^{3,2}$ and there exists two constants $\nu>0$ and $H\in (0,1)$ such that, for all $0<r<u,s,\theta<\tau$
$$
|D_r^W\sigma_s^2|\leq \nu(\tau-r)^{H-\frac12}, \hspace{0.3cm} |D_\theta^W D_r^W\sigma_s^2|\leq \nu^2(\tau-r)^{H-\frac12}(\tau-\theta)^{H-\frac12},
$$
and 
$$
|D_u^W D_\theta^W D_r^W\sigma_s^2|\leq \nu^3(s-r)^{H-\frac12}(s-\theta)^{H-\frac12}(s-u)^{H-\frac12}.
$$
\end{itemize}
\begin{theorem}
\label{uncorrelated}
Consider the model (\ref{themodel}) and assume that hypotheses (H1) and (H2') hold. Then, 
\begin{eqnarray*}
\lefteqn{I\left( t,T,\tau,\hat{k}_t\right) -E_{t}\left[v_{T}\right]}\nonumber\\
&=&
O(\tau-T)^{4H+1} + O(\tau - T)^{2H+1}(\tau - t)^{2H} + O(\tau - T)(\tau - t)^{4H}.
\end{eqnarray*}

\end{theorem}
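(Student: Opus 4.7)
The plan is to start from the exact identity in Proposition~\ref{General} and bound each of the two expectations on the right by combining (i) Malliavin estimates on $U_r$ and $D_r^W U_s$ based on (H2'), and (ii) short-maturity estimates on the Black--Scholes Greek quantities $(BS^{-1}(\hat{k}_t,\Lambda_r))'''$ and $(BS^{-1}(\hat{k}_t,\Lambda_r))^{(iv)}$. The three summands in the claimed bound will emerge by splitting the outer integral over $r\in[t,\tau]$ and the inner integrals over $s\in[r,\tau]$ at the fixing date $T$ and assembling the contributions from each region.

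For the Malliavin step, using the explicit representation of $U_r$ in Proposition~\ref{General} together with (H1) and the first bound in (H2'), I would derive the piecewise estimate $|U_r|\le C\sqrt{\tau-T}\,(\tau-r)^{H-1/2}$ for $r\in[t,T]$ and $|U_r|\le C\,(\tau-r)^{H+1/2}/\sqrt{\tau-T}$ for $r\in[T,\tau]$: the $\sqrt{\tau-T}$ factor comes from the vega, the $1/(\tau-T)$ factor from $1/(2v_T(\tau-T))$, and the remaining fractional factor from the integrated Malliavin bound on $\sigma^2$. A parallel computation using the second and third bounds in (H2') produces analogous piecewise estimates for $D_r^W U_s$ carrying two (resp.\ three) fractional factors, handling also the extra Greek that appears when one differentiates the vega.

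For the Greek step, the defining relation $d_2(\hat{k}_t,I(t,T,\tau,\hat{k}_t))=0$ forces $\hat{k}_t=-\tfrac12 I(t,T,\tau,\hat{k}_t)^2(\tau-T)$, while (H1) confines $\Theta_r=BS^{-1}(\hat{k}_t,\Lambda_r)\in[a,b]$ for all $r\in[t,\tau]$. Together these yield $|d_1(\hat{k}_t,\Theta_r)|,|d_2(\hat{k}_t,\Theta_r)|\le C\sqrt{\tau-T}$ uniformly in $r$, so the vega is of order $\sqrt{\tau-T}$, the vomma of order $(\tau-T)^{3/2}$, and higher $\sigma$-derivatives of $BS$ decay in further powers of $\sqrt{\tau-T}$. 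Inserting these into the implicit-function identities $(BS^{-1})'''=-\partial_\sigma^3 BS/\mathrm{vega}^4+3(\mathrm{vomma})^2/\mathrm{vega}^5$ and the analogous expansion for $(BS^{-1})^{(iv)}$, together with the partial cancellations induced by the zero-vanna scaling, gives the required short-maturity control on these two quantities.

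The assembly is then routine: split $\int_t^\tau$ and each inner $\int_r^\tau$ at $T$ and insert the piecewise bounds. The region where both the outer and inner variables lie in $[T,\tau]$ integrates to $O((\tau-T)^{4H+1})$; the mixed region with outer variable in $[t,T]$ and inner variable in $[T,\tau]$ produces the cross term $O((\tau-T)^{2H+1}(\tau-t)^{2H})$; and the region with both variables in $[t,T]$ yields $O((\tau-T)(\tau-t)^{4H})$. The main obstacle will be the Greek step: the naive bound $1/\mathrm{vega}^n$ explodes as $\tau-T\to0$, and only the combination of the zero-vanna identity with the boundedness of $\sigma$ from (H1) preserves the right scalings; a secondary difficulty is the bookkeeping of $D_r^W U_s$, whose derivation passes through an extra vega-like factor whose Greeks must themselves be shown to be uniformly controlled.
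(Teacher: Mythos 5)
Your proposal is correct, and it takes a genuinely simpler route than the paper's own proof of Theorem~\ref{uncorrelated}. The paper, in Step~1 of its Appendix~C proof, applies the anticipating It\^{o} formula a \emph{second} time (beyond Proposition~\ref{General}) in order to freeze the arguments of $(BS^{-1})'''$ and $(BS^{-1})^{(iv)}$ at $\Lambda_t$; this produces the leading products $(BS^{-1})'''(\hat{k}_t,\Lambda_t)\,E_t[\int_t^\tau (D^-A)_r U_r\,dr]$ and $(BS^{-1})^{(iv)}(\hat{k}_t,\Lambda_t)\,E_t[\int_t^\tau A_r U_r^2\,dr]$ together with four remainder terms $T_1,\dots,T_4$ that then have to be bounded separately in Steps~4--5. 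Your plan skips this expansion entirely: you keep the random Greeks $(BS^{-1})'''(\hat{k}_t,\Lambda_r)$ and $(BS^{-1})^{(iv)}(\hat{k}_t,\Lambda_r)$ inside the integrals of Proposition~\ref{General} and bound them pathwise and uniformly in $r$ by $O((\tau-T)^{-1/2})$ and $O((\tau-T)^{-1})$, which does hold: (H1) guarantees $\Theta_r=BS^{-1}(\hat{k}_t,\Lambda_r)\in[a,b]$ for every $r$, and the zero-vanna identity $\hat{k}_t=-\tfrac12 I(t,T,\tau,\hat{k}_t)^2(\tau-T)$ forces $|d_1(\hat{k}_t,\Theta_r)|,|d_2(\hat{k}_t,\Theta_r)|\le C\sqrt{\tau-T}$, so the vega is bounded below and above by a constant multiple of $\sqrt{\tau-T}$ and the higher $\sigma$-derivatives of $BS$ gain the usual $\sqrt{\tau-T}$ per order. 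The piecewise Malliavin bounds on $U_r$, $D_r^W U_s$, $A_r$ and $(D^-A)_r$ you propose coincide with the paper's in Steps~2--3, and splitting the $r$ and $s$ integrals at $T$ reproduces exactly the three claimed terms $O((\tau-T)^{4H+1})$, $O((\tau-T)^{2H+1}(\tau-t)^{2H})$, $O((\tau-T)(\tau-t)^{4H})$. What your shortcut gives up is the explicit identification of the leading coefficients $(BS^{-1})'''(\hat{k}_t,\Lambda_t)$ and $(BS^{-1})^{(iv)}(\hat{k}_t,\Lambda_t)$, which the paper reuses in the correlated analysis of Theorem~\ref{themaintheorem}; for the $O(\cdot)$ statement of Theorem~\ref{uncorrelated} alone, however, your direct pathwise bound is sufficient and avoids the $T_1,\dots,T_4$ bookkeeping altogether.
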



\subsection{The correlated case}

We will consider the following hypothesis.

\begin{description}
\item[(H3)] Hypotheses (H1), (H2'), hold and 
terms 
\[
\frac{1}{(\tau-T)^{3+2H}}E_{t} \left[\left(\int_{T}^{\tau}
\int_{s}^{\tau}D_{s}^{W}\sigma _{r}^{2}dr ds\right) ^{2}\right] ,
\]%
\[
\frac{1}{(\tau-T)^{2+2H }}E_{t}\left[\int_{T}^{\tau}\left( \int_{s}^{\tau} D_{s}^{W}\sigma
_{r} dr\right) ^{2}ds\right],
\]%
\[
\frac{1}{(\tau-T)^{2+2H }}E_t \left[ \int_{T}^{\tau} \left( E_r\left[\int_r^\tau D_r^W\sigma_s^2 ds\right]\right)^2 dr \right],
\]
and%
\[
\frac{1}{(\tau-T)^{2+2H }}E_{t}\left[
\int_{T}^{\tau} \int_{s}^{\tau} \int_{r}^{\tau}D_{s}^{W}D_{r}^{W}\sigma_{u}^{2}du dr ds\right],
\]%
have a finite limit as $\tau\rightarrow T.$

\end{description}

The following result is an adaptation of Proposition 4.1  in Al\`os and Shiraya (2019) (see Appendix \ref{appendix12}) and it gives us an exact decomposition for the zero vanna implied volatility of the forward starting option that will be the main tool in this Section.
\begin{proposition}
\label{Theoremcorrelatedcase2}
Consider the model \eqref{themodel} and assume
that hypotheses (H1), (H2') and hold for some $H\in (0,1)$. Then, for every $k\in \mathbb{R}$

\begin{eqnarray}
I\left( t,T,\tau,\hat{k}_t\right) &=&I^{0}\left( t,T,\tau,k\right)
\nonumber \\
&&+\frac{\rho }{2}E_t \left[\int_{T}^{\tau}( BS^{-1}) ^{\prime }( \hat{k}_t,\Gamma _{s}) H(s,\tau,X_{s}-X_T,k,v_{s})\zeta_{s}ds\right],
\label{impliedrel}
\end{eqnarray}

where $I^{0}( t,T,\tau,k) $ denotes the zero vanna implied
volatility of the forward starting option in the uncorrelated case $\rho =0$,

\begin{eqnarray*}
\Gamma _{s}&:=&
E_t[BS(T,\tau,0,k,v_T)]\nonumber\\
&&+\frac{\rho}{2}E_t\left[\int_T^s H(u,\tau,,X_u-X_T,k,v_u)\zeta_u du\right].
\end{eqnarray*}

and 

$\zeta _{t}:=\sigma _{t}\int_{t}^{\tau}D_{t}^{W}\sigma _{r}^{2}dr.$ 

\end{proposition}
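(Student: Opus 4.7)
The strategy splits into a price-level decomposition and an inversion step. I would first establish the forward-start analogue of the Al\`os--Shiraya decomposition,
\begin{equation}
\label{pricedec}
V_t = \Lambda_t + \frac{\rho}{2}\, E_t \left[\int_T^\tau H(s,\tau,X_s-X_T,k,v_s)\, \zeta_s\, ds\right],
\end{equation}
so that the auxiliary process $\Gamma_s$ satisfies $\Gamma_T = \Lambda_t$ and $\Gamma_\tau = V_t$. The proposition then follows from the fundamental theorem of calculus applied to $s \mapsto BS^{-1}(\hat k_t, \Gamma_s)$ on $[T,\tau]$: since $s \mapsto \Gamma_s$ is $\mathcal{F}_t$-measurable and absolutely continuous with
\begin{equation*}
\Gamma'_s = \frac{\rho}{2}\, E_t \left[H(s,\tau,X_s-X_T,k,v_s)\, \zeta_s\right],
\end{equation*}
one obtains $I(t,T,\tau,\hat k_t) - I^0(t,T,\tau,k) = BS^{-1}(\hat k_t, \Gamma_\tau) - BS^{-1}(\hat k_t, \Gamma_T) = \int_T^\tau (BS^{-1})'(\hat k_t, \Gamma_s)\, \Gamma'_s\, ds$, and the $\mathcal{F}_t$-measurable weight $(BS^{-1})'(\hat k_t, \Gamma_s)$ can be pulled back inside $E_t[\cdot]$ to recover the stated form.

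To prove \eqref{pricedec}, I would apply the anticipating It\^o formula to the functional $s \mapsto BS(s,\tau,X_s - X_T, k, v_s)$ on $s \in [T,\tau]$, mirroring the argument for spot-starting options in Al\`os--Shiraya (2019). The endpoint values are $BS(T,\tau,0,k,v_T)$ at $s=T$ and, in the limit $s \uparrow \tau$, the payoff $(e^{X_\tau - X_T} - e^k)_+$. The expansion produces the time derivative of $BS$, the usual first- and second-order terms in $x$ coming from the dynamics of $X_s - X_T$, and an anticipating cross-term from the $W$-covariation between $X$ and $v_s^2$ which, by the chain rule, involves $\int_s^\tau D_s^W \sigma_u^2\, du$ and hence $\zeta_s = \sigma_s \int_s^\tau D_s^W \sigma_u^2\, du$. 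Substituting the Black--Scholes PDE $\partial_s BS + \tfrac12 v_s^2 (\partial_{xx} - \partial_x) BS = 0$ cancels the principal drift up to the quadratic-variation mismatch $(\sigma_s^2 - v_s^2)(\partial_{xx}-\partial_x) BS$, which is in turn eliminated by a further integration by parts exploiting $\int_T^\tau \sigma_s^2\, ds = (\tau-T) v_T^2$. After taking $E_t[\cdot]$ to discard the local-martingale parts, only the $\rho$-correction involving $H = (\partial_{xxx}-\partial_{xx})BS$ and $\zeta_s$ survives, giving \eqref{pricedec}.

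The hard part is the rigorous deployment of anticipating stochastic calculus: $v_s$ is not adapted, is singular as $s \uparrow \tau$, and enters nonlinearly through $BS$. Malliavin-differentiability of $v_s$ via $Y_s$ and the needed pathwise regularity follow from hypothesis (H2'); the non-degeneracy $\sigma \ge a > 0$ in (H1) tames the singularity of $BS$ near $s=\tau$; and the growth bounds on $D^W \sigma^2$ imposed by (H2') let us dominate the integrands $H(s,\tau,\cdot,\cdot,v_s)\zeta_s$ uniformly, so that Fubini and dominated convergence justify every exchange of limits, integrals and conditional expectations. Once these estimates are in place, the forward-start adaptation of Al\`os--Shiraya (2019) goes through with only cosmetic changes relative to the spot-starting case, namely translating the log-price by $X_T$ and starting the It\^o expansion at $T$ rather than at $t$.
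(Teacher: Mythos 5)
Your proof is correct in substance, but it takes a genuinely different route from the paper. The paper disposes of this proposition in two sentences by a change-of-process reduction: it observes that $u\mapsto E_u\bigl[e^{X_\tau-X_T}\bigr]$ is itself the price process of a spot-starting stochastic volatility model whose volatility is the truncated process $\sigma_u\mathbf{1}_{[T,\tau]}(u)$ (equal to $e^{X_u-X_T}$ for $u>T$ and to $1$ for $u\le T$), and then cites the existing spot-starting decomposition of Al\`os--Shiraya (Proposition~\ref{Theoremcorrelatedcase} in Appendix~B) verbatim, using the identity $BS(t,\tau,0,k,v_t)=BS(T,\tau,0,k,v_T)$ to recognise the uncorrelated piece as $\Lambda_t$. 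Under that substitution the forcing term $\Phi_s=\sigma_s\int_s^\tau D_s^W\sigma_r^2\,dr$ simply vanishes on $[t,T]$, which is why the integral in \eqref{impliedrel} starts at $T$ rather than at $t$. You instead re-derive the price-level decomposition \eqref{pricedec} from scratch via the anticipating It\^o formula on $s\mapsto BS(s,\tau,X_s-X_T,k,v_s)$ over $[T,\tau]$ and then invert with the fundamental theorem of calculus applied to $s\mapsto BS^{-1}(\hat k_t,\Gamma_s)$, pulling the $\mathcal{F}_t$-measurable weight $(BS^{-1})'(\hat k_t,\Gamma_s)$ inside $E_t[\cdot]$. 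That route is self-contained and makes the mechanics of $\Gamma_T=\Lambda_t$, $\Gamma_\tau=V_t$ explicit, but it essentially re-proves the black-box theorem that the paper simply quotes; the paper's reduction is cleaner, shorter, and automatically inherits the regularity hypotheses (H1), (H2') that were already checked for the spot-starting case.

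Two small caveats on your sketch. First, your description of the drift cancellation is slightly off: with $F(s,x,y)=BS(s,\tau,x,k,\sqrt{y/(\tau-s)})$ one has $\partial_s F\equiv 0$ identically (the explicit time dependence of $BS$ through the PDE cancels against the time dependence of $v_s=\sqrt{Y_s/(\tau-s)}$), so there is no residual $(\sigma_s^2-v_s^2)(\partial_{xx}-\partial_x)BS$ mismatch and no separate integration-by-parts step is needed; the $dY_s=-\sigma_s^2\,ds$ contribution and the $\tfrac12\sigma_s^2(\partial_{xx}-\partial_x)BS$ term annihilate each other directly, leaving only the mixed Malliavin term $\tfrac{\rho}{2}H\,\zeta_s\,ds$. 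Second, both your argument and the proposition statement itself use $\hat k_t$ and $k$ interchangeably in the strike slot (compare $I(t,T,\tau,\hat k_t)$ on the left with $I^0(t,T,\tau,k)$ and $H(\cdot,k,\cdot)$ on the right); your FTC step returns $BS^{-1}(\hat k_t,\Lambda_t)$ for the boundary term, which equals $I^0(t,T,\tau,k)$ only if $k=\hat k_t$. This is an inherited notational inconsistency rather than a gap in your reasoning, but it is worth flagging.
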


\begin{proof}
The process $E_u(e^{X_\tau-X_T})$ is a stochastic volatility model with volatility equal to
$\sigma_u\bf{1}_{[T,\tau]}(u)$ such that it is equatl to $e^{X_u-X_T}$ for $u>T$ and to $1$ if $u<T$. Then, the result follows as a direct application of Proposition 4.1 in Al\`os and Shiraya (Proposition \ref{Theoremcorrelatedcase} in Appendix B) and the fact that $BS(t,\tau,0,k,v_t)=BS(T,\tau,0,k,v_T)$.

\end{proof}

Theorem \ref{uncorrelated} and Proposition \ref{Theoremcorrelatedcase2} allow us to prove the following result.

\begin{theorem}
\label{themaintheorem}
Consider the model \eqref{themodel}\ and assume
that hypotheses (H1), (H2') and (H3)  hold for some $H\in(0,\frac12) $. Then%
\begin{eqnarray}
\label{lim1}
\lefteqn{\lim_{\tau\rightarrow T}\frac{I( t,T,\tau,\hat{k}_t) -E_{t}[v_{T}]} {(\tau-T)^{2H }}}\nonumber \\
&=& \lim_{\tau\rightarrow T}\frac{3\rho^{2}}{8(\tau-T)^{3+2H}}
E_{t}\left[\frac{1}{\sigma _{T}^{3}}\left(\int_{T}^{\tau}\int_{s}^\tau D_{s}^{W}\sigma _{r}^{2}dr ds\right)^{2} \right]\nonumber\\
&&-\lim_{\tau\rightarrow T}\frac{\rho ^{2}}{2(\tau-T)^{2+2H}}
E_{t}\left[\frac{1}{\sigma _{T}}\int_{T}^{\tau}\left(\int_{s}^\tau D_{s}^{W}\sigma_{r} dr\right) ^{2}ds\right]\nonumber\\
&&-\lim_{\tau\rightarrow T}\frac{\rho ^{2}}{2(\tau-T)^{2+2H}}
E_{t}\left[\frac{1}{\sigma _{T}}\int_{T}^{\tau} \int_{s}^\tau\int_{r}^\tau D_{s}^{W}D_{r}^{W}\sigma _{u}^{2}du dr ds\right].
\end{eqnarray}
\end{theorem}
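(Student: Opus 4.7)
The plan is to start from the exact decomposition provided by Proposition \ref{Theoremcorrelatedcase2},
\begin{equation*}
I(t,T,\tau,\hat{k}_t)-E_t[v_T] = \bigl(I^0(t,T,\tau,\hat{k}_t)-E_t[v_T]\bigr) + \frac{\rho}{2}E_t\!\left[\int_T^\tau (BS^{-1})'(\hat{k}_t,\Gamma_s)\,H(s,\tau,X_s-X_T,\hat{k}_t,v_s)\,\zeta_s\,ds\right],
\end{equation*}
and to show that, once divided by $(\tau-T)^{2H}$, all terms on the right-hand side vanish in the limit except the three $\rho^2$-contributions listed in \eqref{lim1}. The uncorrelated piece $I^0-E_t[v_T]$ is controlled by Theorem \ref{uncorrelated}: the bound $O((\tau-T)^{4H+1})+O((\tau-T)^{2H+1}(\tau-t)^{2H})+O((\tau-T)(\tau-t)^{4H})$ divided by $(\tau-T)^{2H}$ tends to zero, the last summand precisely because $H<\tfrac12$ gives $(\tau-T)^{1-2H}\to 0$. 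Hence only the correlation integral contributes to the limit.

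For the correlation integral I would exploit the three elementary scaling estimates $\zeta_s=O((\tau-s)^{H+1/2})$, $(BS^{-1})'(\hat{k}_t,\Lambda_s^0)=O((\tau-T)^{-1/2})$ and $H(s,\tau,\cdot,\hat{k}_t,\cdot)=O((\tau-s)^{-1/2})$ (the last two at zero vanna and short maturity). Replacing every $\rho$-dependent factor by its $\rho=0$ baseline (so $\Gamma_s$ by $\Lambda_s^0:=E_t[BS(T,\tau,0,\hat{k}_t,v_T)]$, $X_s-X_T$ by its $B$-martingale part minus $\tfrac12\int_T^s\sigma^2\,du$) produces a \emph{direct} $O(\rho)$ contribution of order $(\tau-T)^{H+1/2}$; dividing by $(\tau-T)^{2H}$ yields $O((\tau-T)^{1/2-H})\to 0$ under $H<\tfrac12$, so this piece also falls away. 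The surviving contribution is $O(\rho^2)$ and comes from three sources: a Taylor expansion of $(BS^{-1})'(\hat{k}_t,\Gamma_s)$ around $\Lambda_s^0$ whose linear term $(BS^{-1})''(\hat{k}_t,\Lambda_s^0)(\Gamma_s-\Lambda_s^0)$ is itself $O(\rho)$ by the definition of $\Gamma_s$; a Taylor expansion of $H(s,\tau,X_s-X_T,\hat{k}_t,v_s)$ in $X_s-X_T$, in which only the $\rho$-component $\rho\int_T^s\sigma\,dW$ of the $W$-martingale part survives the pairing with $\zeta_s$; and a second-Malliavin-derivative term produced by the anticipative It\^o/Clark--Ocone step. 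Simplification using the Black--Scholes identity $H=-(2/\sigma^2)\partial_t\partial_x BS$, the zero-vanna relation $d_2(\hat{k}_t,I)=0$, and the freezing of $\sigma$ and $v_s$ to $\sigma_T$ outside the Malliavin derivatives (legitimate by right-continuity and the $(\tau-s)^{H+1/2}$ decay of $\zeta_s$) then produces exactly the three kernels on the right-hand side of \eqref{lim1}, with coefficients $3/8$, $-1/2$, $-1/2$; the finiteness of the limits is guaranteed by (H3).

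The main obstacle is the uniform control of the remainder terms generated at every step of the expansion: the cubic and higher Taylor remainders in $(BS^{-1})$ and $H$, and the anticipative It\^o manipulations, must all be shown to be $o((\tau-T)^{2H})$ without producing hidden singularities as $s\uparrow\tau$. Both are handled by systematically pairing surplus $\rho$ factors with surplus Malliavin derivatives, each contributing an extra $(\tau-r)^{H-1/2}$ via (H2'); the bookkeeping is a direct adaptation of the spot-start argument in Al\`os and Shiraya (2019) to the forward-start kernels of Proposition \ref{Theoremcorrelatedcase2}, obtained by replacing the vanilla BS evaluations with their forward-start analogues and shifting integration ranges from $[t,\tau]$ to $[T,\tau]$.
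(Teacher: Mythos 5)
Your opening moves match the paper's: start from Proposition \ref{Theoremcorrelatedcase2} to split $I(t,T,\tau,\hat{k}_t)-E_t[v_T]$ into the uncorrelated piece (killed by Theorem \ref{uncorrelated} and $H<\tfrac12$) plus the correlation integral. However, your treatment of the correlation integral contains a genuine gap.

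Your claim that the ``direct $O(\rho)$ contribution'' is of order $(\tau-T)^{H+\frac12}$ rests on the estimate $H(s,\tau,X_s-X_T,\hat{k}_t,v_s)=O((\tau-s)^{-\frac12})$, which you attribute to the zero-vanna condition. That condition, $d_2(\hat{k}_t,I(t,T,\tau,\hat{k}_t))=0$, concerns the pair $(x,\sigma)=(0,I)$. For $s>T$ the function $H$ is evaluated at $x=X_s-X_T\neq 0$ and $\sigma=v_s\neq I$, so no zero-vanna cancellation is available pointwise. Writing $H=G\bigl(1-\tfrac{d_1}{v_s\sqrt{\tau-s}}\bigr)$ with $G=O((\tau-s)^{-\frac12})$ and $1-\tfrac{d_1}{v_s\sqrt{\tau-s}}=\tfrac12-\tfrac{X_s-X_T-\hat{k}_t}{v_s^2(\tau-s)}$, the second factor is generically $O((\tau-s)^{-\frac12})$, so the honest bound is $H=O((\tau-s)^{-1})$, which after integration gives $T_2=O((\tau-T)^H)$. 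Dividing by $(\tau-T)^{2H}$ this diverges, so the argument as stated does not close. The cancellation does exist, but it is not visible by order counting on the original integral; it is precisely what the paper's first application of the anticipating It\^o formula, to the process $H(s,\tau,X_s-X_T,\hat{k}_t,v_s)J_s$ with $J_s=\int_s^\tau(BS^{-1})'(\hat{k}_t,\Gamma_u)\zeta_u\,du$, is designed to extract: it converts the integral into a boundary term $T_2^1$ evaluated at $s=T$ (where the near-zero-vanna factor $v_T^2-I^2$ does appear) plus drift terms $T_2^2,T_2^3$ that carry an explicit extra $\rho$. Your sketch mentions an ``anticipative It\^o/Clark--Ocone step'' only for the third contribution, but it is indispensable for the whole decomposition.

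A second, related omission: the boundary term $T_2^1$ is not small outright. After the Malliavin-type decomposition of $v_T^2-I^2$ it contains a piece $T_2^{1,1}$ proportional to $I(t,T,\tau,\hat{k}_t)-E_t[v_T]$, i.e., the very quantity being estimated. The paper resolves this by moving $T_2^{1,1}$ to the left-hand side and dividing through by $1-\tfrac{\rho}{4\sigma_t^2(\tau-T)}E_t[(I+v_t)\int_T^\tau\int_s^\tau D_s^W\sigma_r^2\,dr\,ds]$ (which tends to $1$ as $\tau\to T$). Without this renormalization step the three $\rho^2$ kernels you list cannot be isolated as the limit. Finally, a second application of the anticipating It\^o formula (to $\left(\partial_x^3-\partial_x^2\right)H\cdot Z_s$ and $\partial_x H\cdot R_s$), together with the boundedness of the $O(\rho^3)$ remainders \`a la Al\`os--Le\'on--Vives, is what actually produces the three kernels and their coefficients $3/8,-1/2,-1/2$; a two-variable Taylor expansion of $(BS^{-1})'$ and $H$, while heuristically pointing in the right direction, does not by itself produce either the correct bookkeeping of the Malliavin pairings or the explicit constants.
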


\begin{corollary}
Assume that $\sigma_t=f(B^H_t)$, where $f\in\mathcal{C}_b^3$ with a range in a compact set of $\mathbb{R}^+$  and $B^H_t$ is a fBm with Hurst parameter $H$. Then 
the above result proves that, in the correlated case
$$I(t,T,\tau, X_t,\hat{k}_t) -E_{t}[ v_{t}]=O((\tau-T)^{2H}).$$
\end{corollary}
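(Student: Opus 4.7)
The corollary is in essence a direct application of Theorem \ref{themaintheorem}, so the plan is to verify its hypotheses for the model $\sigma_t=f(B^H_t)$ and then check that the three limits on the right-hand side of \eqref{lim1} are finite. Hypothesis (H1) is immediate because $f$ takes values in a compact subset of $\mathbb{R}^+$, so $\sigma$ is bounded away from $0$ and $\infty$.

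The next step is (H2'). I would use the Volterra representation $B^H_t=\int_0^t K_H(t,s)\,dW_s$, where the kernel satisfies $|K_H(t,s)|\le c_H(t-s)^{H-\frac12}$ on compact time intervals. The chain rule gives
\[
D_r^W\sigma_s^2 = 2f(B^H_s)f'(B^H_s)\,K_H(s,r)\,\mathbf{1}_{\{r\le s\}},
\]
and iterating yields analogous expressions for $D_\theta^W D_r^W\sigma_s^2$ and $D_u^W D_\theta^W D_r^W\sigma_s^2$ as products of polynomial expressions in $f,f',f'',f'''$ evaluated at $B^H_s$, multiplied by products of the kernels $K_H(s,\cdot)$. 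Since $f\in\mathcal{C}^3_b$, all these prefactors are bounded, and the required pointwise bounds on the iterated Malliavin derivatives follow from the kernel estimate.

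With (H2') in hand, I would verify (H3) by plugging the Malliavin-derivative bounds into each of the four quantities. For the first term,
\[
\Bigl|\int_T^\tau\!\int_s^\tau D_s^W\sigma_r^2\,dr\,ds\Bigr|
\le C\int_T^\tau\!\int_s^\tau (r-s)^{H-\frac12}dr\,ds
= O\bigl((\tau-T)^{H+\frac32}\bigr),
\]
so its square divided by $(\tau-T)^{3+2H}$ is uniformly bounded. The same kind of elementary beta-type integral estimate handles the remaining three terms in (H3); in particular, for the last one we use that $(u-s)^{H-\frac12}(u-r)^{H-\frac12}\le(u-r)^{2H-1}$ when $s\le r\le u$ and $H<\frac12$, so that the triple integral is $O((\tau-T)^{2H+2})$.

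Finally, the three limits appearing on the right-hand side of \eqref{lim1} are precisely of the type controlled by the same bounds, multiplied by $\sigma_T^{-1}$ or $\sigma_T^{-3}$, which are bounded since $\sigma$ is bounded below. Thus each of those limits is finite, and Theorem \ref{themaintheorem} yields $I(t,T,\tau,\hat{k}_t)-E_t[v_T]=O((\tau-T)^{2H})$, which is the claim (the statement $E_t[v_t]$ in the corollary being a typo for $E_t[v_T]$). The main technical nuisance is the book-keeping for the third-order derivative in (H2'): the product-rule expansion of $D_u^W D_\theta^W D_r^W\sigma_s^2$ produces several terms mixing factors of $K_H$ at the three different arguments, and one must check that each surviving combination remains dominated by the product $(s-r)^{H-\frac12}(s-\theta)^{H-\frac12}(s-u)^{H-\frac12}$ rather than by an extra singular factor at the origin.
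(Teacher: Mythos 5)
The paper supplies no explicit proof of this corollary---it is asserted as a direct consequence of Theorem \ref{themaintheorem}---and your plan is the natural, correct fill-in: use the Volterra kernel bound for the fBm to verify (H1)--(H3) and then invoke the theorem. You are also right that $E_t[v_t]$ in the statement is a typo for $E_t[v_T]$.

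One bookkeeping point worth flagging: your kernel estimate gives $|D_r^W\sigma_s^2|\le C(s-r)^{H-\frac12}$, whereas the first two inequalities in (H2') as printed demand $\nu(\tau-r)^{H-\frac12}$, which for $H<\tfrac12$ is the \emph{smaller} quantity (since $\tau-r\ge s-r$ and the exponent is negative), so your bound does not literally imply the printed hypothesis. This is almost certainly a typo in the paper: the third-derivative bound in (H2') and all of (H2) already use the $(s-\cdot)^{H-\frac12}$ form, and the estimates in the proof of Theorem \ref{uncorrelated} are consistent with that reading. With that reading, each of your (H3) verifications (the beta-type integrals) is correct; in particular the $(u-s)^{H-\frac12}(u-r)^{H-\frac12}\le(u-r)^{2H-1}$ majorization on the ordered simplex $s\le r\le u$ gives the required $O((\tau-T)^{2H+2})$ for the triple integral, and the other three quantities are handled the same way.
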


\begin{remark}
Notice that the term $T_2^{1,2}$ in Appendix \ref{appendix2} is of the order $(\rho(T-t)^{\frac12+{2}H})$. When $T-t$ does not tend to zero, this term can not be neglected.
\end{remark}

\begin{remark}
Hypotheses (H1)-(H3) have been chosen for the sake of simplicity. The same results can be extended to other stochastic volatility models (see e.g., Section 5 in Al\`{o}s and Shiraya (2019)).
\end{remark}


\section{Numerical examples}

We check our estimates for the forward starting volatility swap numerically. To examine the accuracy of the approximation, we assume the following rough Bergomi model as in Bayer, Friz and Gatheral (2016):
\begin{eqnarray}
S_t &=& \exp\left( X_0- \frac{1}{2} \int_{0}^{t} \sigma _{s}^{2} ds+\int_{0}^{t}\sigma
_{s}dB_{s}\right),\\
\sigma_t^2 &=&\sigma_0^2\exp\left(\alpha W_t^H-\frac12 \alpha^2t^{2H}\right),\ t\in[0,T],
\end{eqnarray}
where $\sigma_0$ and $\alpha$ are positive constants, $H\in (0,1)$, $W_t^H:=\sqrt{2H}\int_0^t (t-s)^{H-\frac12}dW_s$, and $W$ and $B$ are standard Brownian motions.

For all $s<t$
\begin{eqnarray}
\label{rBcov}
E(W_t^H W_s^H)
&=& 
s^{2H} \int^1_0 \frac{2H}{(1-x)^{\frac12 - H}(t/s-x)^{\frac12 - H}}dx \nonumber \\
&=&
s^{2H}2H\left(\frac{1}{(1-(1/2 -H))(t/s)^{(1/2 -H)}}+ \frac{1-1/2H}{1-(1-1/2H)}\int^1_0 \frac{(1-x)^{1-(1-1/2H)}}{(t/s - x)^{(1-1/2H)+1}}dx\right),
\end{eqnarray}
and for all $s, t \ge 0$,
\begin{equation}
\label{rBW}
E(W_t^H B_s)=\frac{\rho\sqrt{2H}}{H+\frac12}\left(t^{H+\frac12}-(t-\min(t,s))^{H+\frac12}\right).
\end{equation}
where $\rho \in [-1,1]$.

We set the parameters $\sigma _{0} = 20\%$, $\alpha =0.8$, $\rho =0$ or $-0.8$, and the Hurst parameters $H=0.05$, $0.1$, $0.3$.
The case $\alpha=2$ is also numerically simulated in order to test the accuracy of the approximation for parameter values commonly encountered in equity markets.

To calculate the implied volatility for the forward start European option and forward volatility swap strike, we use Monte Carlo simulation, whose number of time steps is 250 for one year and the number of simulations is ten million.
We use the Black-Scholes model as the control variate for the Monte Carlo simulation.
After obtaining the exact forward start volatility swap strikes and options prices, we calculate the ATM and zero vanna implied volatilities by the bisection method.

Tables \ref{t1} to \ref{t4} show the results. In the tables, $E_t[v_T]$ is the simulated exact forward start volatility swap value,
and $I(\hat{k})$ and ATMI are the implied volatilities at the zero vanna strike and ATM strike for the forward starting European option, respectively.
$T-t$ is the interval from contract's time to the forward stating time, and $\tau - T$ shows the interval from the forward starting time to the maturity.

\begin{table}[H]
\centering
\begin{adjustbox}{max width=\textwidth}
\begin{tabular}{ccrrrrrrrrr} \hline
 & $T-t$ & \multicolumn{1}{c}{0.5} & \multicolumn{1}{c}{0.5} & \multicolumn{1}{c}{0.5} & \multicolumn{1}{c}{1} & \multicolumn{1}{c}{1} & \multicolumn{1}{c}{1} & \multicolumn{1}{c}{2} & \multicolumn{1}{c}{2} & \multicolumn{1}{c}{2} \\
$H$ index & $\tau-T$ & \multicolumn{1}{c}{0.5} & \multicolumn{1}{c}{1} & \multicolumn{1}{c}{2} & \multicolumn{1}{c}{0.5} & \multicolumn{1}{c}{1} & \multicolumn{1}{c}{2} & \multicolumn{1}{c}{0.5} & \multicolumn{1}{c}{1} & \multicolumn{1}{c}{2} \\\hline
$0.05$ & $E_t[v_T]$ & 19.58\% & 19.62\% & 19.65\% & 19.50\% & 19.55\% & 19.59\% & 19.40\% & 19.46\% & 19.52\% \\
 & $I (\hat{k})$ & 19.57\% & 19.62\% & 19.64\% & 19.49\% & 19.55\% & 19.59\% & 19.40\% & 19.46\% & 19.51\% \\
 & ATMI & 19.57\% & 19.62\% & 19.64\% & 19.49\% & 19.55\% & 19.58\% & 19.40\% & 19.45\% & 19.50\% \\\cline{2-11}
 & $I (\hat{k})$ $-$ $E_t[v_T]$ & 0.00\% & 0.00\% & 0.00\% & 0.00\% & 0.00\% & -0.01\% & 0.00\% & 0.00\% & -0.01\% \\
 & ATMI $-$ $E_t[v_T]$ & 0.00\% & 0.00\% & -0.01\% & 0.00\% & 0.00\% & -0.01\% & 0.00\% & -0.01\% & -0.02\% \\\hline
$0.1$ & $E_t[v_T]$ & 19.32\% & 19.36\% & 19.38\% & 19.16\% & 19.23\% & 19.27\% & 18.97\% & 19.04\% & 19.11\% \\
 & $I (\hat{k})$ & 19.32\% & 19.36\% & 19.37\% & 19.16\% & 19.23\% & 19.26\% & 18.97\% & 19.04\% & 19.10\% \\
 & ATMI & 19.32\% & 19.36\% & 19.36\% & 19.16\% & 19.22\% & 19.25\% & 18.96\% & 19.03\% & 19.09\% \\\cline{2-11}
 & $I (\hat{k})$$-$ $E_t[v_T]$ & 0.00\% & 0.00\% & 0.00\% & 0.00\% & 0.00\% & -0.01\% & 0.00\% & -0.01\% & -0.01\% \\
 & ATMI $-$ $E_t[v_T]$ & -0.01\% & 0.00\% & -0.01\% & -0.01\% & -0.01\% & -0.02\% & 0.00\% & -0.01\% & -0.03\% \\\hline
$0.3$ & $E_t[v_T]$ & 18.97\% & 18.88\% & 18.69\% & 18.52\% & 18.46\% & 18.32\% & 17.81\% & 17.79\% & 17.71\% \\
 & $I (\hat{k})$ & 18.97\% & 18.88\% & 18.69\% & 18.51\% & 18.46\% & 18.31\% & 17.81\% & 17.78\% & 17.69\% \\
 & ATMI & 18.97\% & 18.87\% & 18.67\% & 18.51\% & 18.44\% & 18.28\% & 17.80\% & 17.76\% & 17.65\% \\\cline{2-11}
 &$I (\hat{k})$$-$ $E_t[v_T]$ & 0.00\% & 0.00\% & 0.00\% & 0.00\% & -0.01\% & -0.01\% & 0.00\% & -0.01\% & -0.02\% \\
 & ATMI $-$ $E_t[v_T]$ & -0.01\% & -0.01\% & -0.03\% & -0.01\% & -0.02\% & -0.04\% & -0.01\% & -0.03\% & -0.06\% \\\hline
\end{tabular}
\end{adjustbox}
\caption{ ($\rho=0$, $\alpha = 0.8$)}
\label{t1}
\end{table}

\begin{table}[H]
\centering
\begin{adjustbox}{max width=\textwidth}
\begin{tabular}{ccrrrrrrrrr} \hline
 & $T-t$ & \multicolumn{1}{c}{0.5} & \multicolumn{1}{c}{0.5} & \multicolumn{1}{c}{0.5} & \multicolumn{1}{c}{1} & \multicolumn{1}{c}{1} & \multicolumn{1}{c}{1} & \multicolumn{1}{c}{2} & \multicolumn{1}{c}{2} & \multicolumn{1}{c}{2} \\
$H$ index & $\tau-T$ & \multicolumn{1}{c}{0.5} & \multicolumn{1}{c}{1} & \multicolumn{1}{c}{2} & \multicolumn{1}{c}{0.5} & \multicolumn{1}{c}{1} & \multicolumn{1}{c}{2} & \multicolumn{1}{c}{0.5} & \multicolumn{1}{c}{1} & \multicolumn{1}{c}{2} \\\hline
$0.05$ & $E_t[v_T]$ & 19.58\% & 19.62\% & 19.65\% & 19.50\% & 19.55\% & 19.59\% & 19.40\% & 19.46\% & 19.52\% \\
 &$I (\hat{k})$ & 19.45\% & 19.48\% & 19.50\% & 19.37\% & 19.42\% & 19.45\% & 19.27\% & 19.33\% & 19.37\% \\
 & ATMI & 19.30\% & 19.27\% & 19.19\% & 19.23\% & 19.21\% & 19.14\% & 19.13\% & 19.12\% & 19.06\% \\\cline{2-11}
 &$I (\hat{k})$$-$ $E_t[v_T]$ & -0.13\% & -0.13\% & -0.14\% & -0.12\% & -0.13\% & -0.15\% & -0.12\% & -0.13\% & -0.15\% \\
 & ATMI $-$ $E_t[v_T]$ & -0.27\% & -0.35\% & -0.46\% & -0.27\% & -0.34\% & -0.46\% & -0.27\% & -0.35\% & -0.46\% \\\hline
$0.1$ & $E_t[v_T]$ & 19.32\% & 19.36\% & 19.38\% & 19.16\% & 19.23\% & 19.27\% & 18.97\% & 19.04\% & 19.11\% \\
 &$I (\hat{k})$& 19.13\% & 19.14\% & 19.13\% & 18.98\% & 19.02\% & 19.02\% & 18.78\% & 18.82\% & 18.85\% \\
 & ATMI & 18.96\% & 18.88\% & 18.73\% & 18.81\% & 18.76\% & 18.62\% & 18.61\% & 18.57\% & 18.46\% \\\cline{2-11}
 &$I (\hat{k})$ $-$ $E_t[v_T]$ & -0.19\% & -0.22\% & -0.25\% & -0.18\% & -0.21\% & -0.25\% & -0.19\% & -0.22\% & -0.26\% \\
 & ATMI $-$ $E_t[v_T]$ & -0.36\% & -0.48\% & -0.64\% & -0.36\% & -0.47\% & -0.64\% & -0.36\% & -0.48\% & -0.65\% \\\hline
$0.3$ & $E_t[v_T]$& 18.97\% & 18.88\% & 18.69\% & 18.52\% & 18.46\% & 18.32\% & 17.81\% & 17.79\% & 17.71\% \\
 & $I (\hat{k})$& 18.77\% & 18.58\% & 18.24\% & 18.32\% & 18.16\% & 17.86\% & 17.61\% & 17.47\% & 17.23\% \\
 & ATMI & 18.60\% & 18.29\% & 17.77\% & 18.15\% & 17.88\% & 17.40\% & 17.45\% & 17.21\% & 16.79\% \\\cline{2-11}
 & $I (\hat{k})$ $-$ $E_t[v_T]$& -0.20\% & -0.30\% & -0.45\% & -0.20\% & -0.30\% & -0.46\% & -0.21\% & -0.32\% & -0.48\% \\
 & ATMI $-$ $E_t[v_T]$& -0.37\% & -0.59\% & -0.93\% & -0.36\% & -0.58\% & -0.92\% & -0.36\% & -0.58\% & -0.92\% \\\hline
\end{tabular}
\end{adjustbox}
\caption{Zero vanna implied volatility approximation ($\rho=-0.8$, $\alpha = 0.8$)}
\label{t2}
\end{table}

\begin{table}[H]
\centering
\begin{adjustbox}{max width=\textwidth}
\begin{tabular}{ccrrrrrrrrr} \hline
 & $T-t$ & \multicolumn{1}{c}{0.5} & \multicolumn{1}{c}{0.5} & \multicolumn{1}{c}{0.5} & \multicolumn{1}{c}{1} & \multicolumn{1}{c}{1} & \multicolumn{1}{c}{1} & \multicolumn{1}{c}{2} & \multicolumn{1}{c}{2} & \multicolumn{1}{c}{2} \\
$H$ index & $\tau-T$ & \multicolumn{1}{c}{0.5} & \multicolumn{1}{c}{1} & \multicolumn{1}{c}{2} & \multicolumn{1}{c}{0.5} & \multicolumn{1}{c}{1} & \multicolumn{1}{c}{2} & \multicolumn{1}{c}{0.5} & \multicolumn{1}{c}{1} & \multicolumn{1}{c}{2} \\\hline
$0.05$ & $E_t[v_T]$ & 17.31\% & 17.57\% & 17.74\% & 16.86\% & 17.20\% & 17.44\% & 16.35\% & 16.72\% & 17.04\% \\
 &$I (\hat{k})$& 17.30\% & 17.57\% & 17.73\% & 16.85\% & 17.19\% & 17.42\% & 16.34\% & 16.70\% & 17.01\% \\
 & ATMI & 17.29\% & 17.55\% & 17.70\% & 16.84\% & 17.17\% & 17.38\% & 16.33\% & 16.67\% & 16.97\% \\\cline{2-11}
 & $I (\hat{k})$ $-$ $E_t[v_T]$ & -0.01\% & 0.00\% & -0.01\% & -0.01\% & -0.01\% & -0.03\% & 0.00\% & -0.02\% & -0.03\% \\
 & ATMI $-$ $E_t[v_T]$ & -0.02\% & -0.02\% & -0.05\% & -0.02\% & -0.03\% & -0.07\% & -0.02\% & -0.04\% & -0.08\% \\\hline
$0.1$ &$E_t[v_T]$ & 15.99\% & 16.17\% & 16.23\% & 15.18\% & 15.48\% & 15.66\% & 14.22\% & 14.58\% & 14.88\% \\
 & $I (\hat{k})$& 15.98\% & 16.16\% & 16.20\% & 15.16\% & 15.45\% & 15.60\% & 14.21\% & 14.53\% & 14.80\% \\
 & ATMI & 15.96\% & 16.13\% & 16.14\% & 15.15\% & 15.42\% & 15.54\% & 14.19\% & 14.50\% & 14.74\% \\\cline{2-11}
 & $I (\hat{k})$ $-$ $E_t[v_T]$ & -0.01\% & -0.01\% & -0.03\% & -0.02\% & -0.02\% & -0.06\% & -0.02\% & -0.04\% & -0.07\% \\
 & ATMI $-$ $E_t[v_T]$ & -0.03\% & -0.04\% & -0.09\% & -0.03\% & -0.05\% & -0.11\% & -0.03\% & -0.08\% & -0.13\% \\\hline
$0.3$ & $E_t[v_T]$ & 14.35\% & 13.90\% & 13.09\% & 12.32\% & 12.07\% & 11.50\% & 9.67\% & 9.57\% & 9.26\% \\
 & $I (\hat{k})$ & 14.33\% & 13.86\% & 12.97\% & 12.27\% & 11.99\% & 11.32\% & 9.60\% & 9.42\% & 9.02\% \\
 & ATMI & 14.31\% & 13.83\% & 12.91\% & 12.26\% & 11.95\% & 11.26\% & 9.58\% & 9.39\% & 8.97\% \\\cline{2-11}
 & $I (\hat{k})$ $-$$E_t[v_T]$& -0.02\% & -0.04\% & -0.13\% & -0.04\% & -0.09\% & -0.18\% & -0.07\% & -0.14\% & -0.24\% \\
 & ATMI $-$ $E_t[v_T]$ & -0.04\% & -0.08\% & -0.18\% & -0.06\% & -0.12\% & -0.24\% & -0.09\% & -0.17\% & -0.29\% \\\hline
\end{tabular}
\end{adjustbox}
\caption{ ($\rho=0$, $\alpha=2$)}
\label{t3}
\end{table}

\begin{table}[H]
\centering
\begin{adjustbox}{max width=\textwidth}
\begin{tabular}{ccrrrrrrrrr} \hline
 & $T-t$ & \multicolumn{1}{c}{0.5} & \multicolumn{1}{c}{0.5} & \multicolumn{1}{c}{0.5} & \multicolumn{1}{c}{1} & \multicolumn{1}{c}{1} & \multicolumn{1}{c}{1} & \multicolumn{1}{c}{2} & \multicolumn{1}{c}{2} & \multicolumn{1}{c}{2} \\
$H$ index & $\tau-T$ & \multicolumn{1}{c}{0.5} & \multicolumn{1}{c}{1} & \multicolumn{1}{c}{2} & \multicolumn{1}{c}{0.5} & \multicolumn{1}{c}{1} & \multicolumn{1}{c}{2} & \multicolumn{1}{c}{0.5} & \multicolumn{1}{c}{1} & \multicolumn{1}{c}{2} \\\hline
$0.05$ & $E_t[v_T]$  & 17.31\% & 17.57\% & 17.74\% & 16.86\% & 17.20\% & 17.44\% & 16.35\% & 16.72\% & 17.04\% \\
 & $I (\hat{k})$& 16.85\% & 17.07\% & 17.20\% & 16.41\% & 16.70\% & 16.89\% & 15.89\% & 16.22\% & 16.47\% \\
 & ATMI & 16.63\% & 16.74\% & 16.72\% & 16.19\% & 16.38\% & 16.42\% & 15.69\% & 15.92\% & 16.02\% \\\cline{2-11}
 & $I (\hat{k})$ $-$ $E_t[v_T]$  & -0.46\% & -0.50\% & -0.55\% & -0.45\% & -0.50\% & -0.55\% & -0.46\% & -0.50\% & -0.57\% \\
 & ATMI $-$ $E_t[v_T]$  & -0.68\% & -0.83\% & -1.02\% & -0.66\% & -0.81\% & -1.02\% & -0.66\% & -0.80\% & -1.02\% \\\hline
$0.1$ & $E_t[v_T]$  & 15.99\% & 16.17\% & 16.23\% & 15.18\% & 15.48\% & 15.66\% & 14.22\% & 14.58\% & 14.88\% \\
 & $I (\hat{k})$ & 15.23\% & 15.29\% & 15.22\% & 14.44\% & 14.61\% & 14.65\% & 13.48\% & 13.71\% & 13.85\% \\
 & ATMI & 14.99\% & 14.93\% & 14.69\% & 14.21\% & 14.27\% & 14.15\% & 13.29\% & 13.41\% & 13.40\% \\\cline{2-11}
 & $I (\hat{k})$ $-$ $E_t[v_T]$  & -0.76\% & -0.88\% & -1.01\% & -0.75\% & -0.87\% & -1.01\% & -0.74\% & -0.86\% & -1.03\% \\
 & ATMI $-$ $E_t[v_T]$  & -1.00\% & -1.24\% & -1.53\% & -0.97\% & -1.20\% & -1.51\% & -0.93\% & -1.17\% & -1.48\% \\\hline
$0.3$ & $E_t[v_T]$  & 14.35\% & 13.90\% & 13.09\% & 12.32\% & 12.07\% & 11.50\% & 9.67\% & 9.57\% & 9.26\% \\
 & $I (\hat{k})$ & 13.45\% & 12.65\% & 11.45\% & 11.47\% & 10.89\% & 9.94\% & 8.91\% & 8.50\% & 7.86\% \\
 & ATMI & 13.24\% & 12.34\% & 11.04\% & 11.31\% & 10.66\% & 9.62\% & 8.81\% & 8.35\% & 7.65\% \\\cline{2-11}
 & $I (\hat{k})$ $-$ $E_t[v_T]$  & -0.90\% & -1.25\% & -1.65\% & -0.84\% & -1.18\% & -1.56\% & -0.76\% & -1.06\% & -1.40\% \\
 & ATMI $-$ $E_t[v_T]$  & -1.11\% & -1.56\% & -2.05\% & -1.00\% & -1.42\% & -1.88\% & -0.86\% & -1.21\% & -1.61\% \\\hline
\end{tabular}
\end{adjustbox}
\caption{Zero vanna implied volatility approximation ($\rho=-0.8$, $\alpha=2$)}
\label{t4}
\end{table}

In all cases, the zero vanna implied volatility for the forward starting European option approximates the forward start volatility swap strike better than ATM implied volatility for the forward starting European option.
Also, as expected, the zero vanna approximation for the forward start volatility swap is accurate in the uncorrelated case compared with the correlated case.
It is because the error order in the uncorrelated case is higher than that of the correlated case.
Moreover, since the error terms depend on the terms of the forward starting time and the maturity,
the small $\tau-T$ results are better than those of large $\tau-T$.
Comparing the effects of the size of $T-t$ and the size of $\tau-T$ on accuracy, our results show $\tau-T$ affects greater than $T-t$. 
Regarding the Hurst parameter, as the parameter increases, the order on the time increases,
and the approximation errors in short terms becomes smaller as shown in Theorems \ref{uncorrelated} and \ref{themaintheorem}.


\section{Conclusion}
A rigorous quantification of the approximation error of the forward start zero vanna implied volatility has been given in rough volatility models. The results are an extension to the forward start case of previous results on the spot starting case. For relatively tractable rough volatility models, such as the rBergomi model, our results can be used to accurately price short term forward volatility swaps without having to resort to numerical schemes. Furthermore, the result can also be employed to provide lower bounds for VIX futures prices.

\appendix
\section{Malliavin calculus}\label{appendix1}
In this appendix, we present the basic Malliavin calculus results we use in this paper. The first one is the Clark-Ocone-Haussman formula, that allows us to compute explicitly the martingale representation of a random variable $F\in \mathbb{D}^{1,2}_W$.

\begin{theorem}[ Clark-Ocone-Harussman formula]Consider a Brownian motion $W=\{W_t, t\in [0,T]\}$ and a random variable $F\in \mathbb{D}^{1,2}_W$. Then
$$
F=E[F]+\int_0^T E_r[D_r^WF] dW_r.
$$
\end{theorem}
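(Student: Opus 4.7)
The plan is to combine the Brownian martingale representation theorem with the Wiener chaos decomposition, and then identify the integrand in the representation as the predictable projection of the Malliavin derivative. First I would invoke the martingale representation theorem: since $F\in L^2(\Omega,\mathcal{F}^W_T,P)\subset \mathbb{D}^{1,2}_W$, there exists a unique adapted process $\phi\in L^2([0,T]\times\Omega)$ such that
\[
F=E[F]+\int_0^T \phi_r\,dW_r.
\]
The whole task is therefore to show $\phi_r=E_r[D_r^W F]$ in $L^2([0,T]\times\Omega)$.

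Next I would establish the identity on the dense subspace of finite chaos. Every $F\in L^2(\Omega,\mathcal{F}^W_T)$ admits a decomposition $F=\sum_{n\geq 0} I_n(f_n)$ with $f_n$ symmetric and square-integrable on $[0,T]^n$. For an $n$-th multiple Wiener–Itô integral, the standard rules of Malliavin calculus give $D_r^W I_n(f_n)=n\,I_{n-1}(f_n(\cdot,r))$, and since the integrand of the Itô representation of $I_n(f_n)$ is the predictable process $n\,I_{n-1}(f_n(\cdot,r)\mathbf{1}_{[0,r]^{n-1}})$, one has
\[
E_r\!\left[D_r^W I_n(f_n)\right]=n\,I_{n-1}\!\left(f_n(\cdot,r)\mathbf{1}_{[0,r]^{n-1}}\right),
\]
which is exactly $\phi_r$ for $F=I_n(f_n)$. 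Summing over $n$ (the series converges in $L^2$ because $F\in\mathbb{D}^{1,2}_W$ gives $\sum_n n\,n!\|f_n\|^2<\infty$, which controls both the Itô isometry of $\int_0^T E_r[D_r^WF]dW_r$ and the $L^2([0,T]\times\Omega)$ norm of $E_\cdot[D_\cdot^W F]$) yields the formula for every $F$ with finite chaos expansion.

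The remaining step is a density argument. The map $F\mapsto E[F]+\int_0^T E_r[D_r^W F]\,dW_r$ and the identity map on $L^2(\Omega)$ must be shown to be continuous with respect to the same norm so that equality on a dense subset propagates. For the right-hand side I would use the Itô isometry together with the conditional Jensen inequality to bound
\[
E\!\left[\left(\int_0^T E_r[D_r^W F]\,dW_r\right)^2\right]=\int_0^T E\!\left[(E_r[D_r^W F])^2\right]dr\leq \int_0^T E\!\left[(D_r^W F)^2\right]dr=\|DF\|_{L^2([0,T]\times\Omega)}^2,
\]
so the right-hand side is continuous in the $\mathbb{D}^{1,2}_W$ norm. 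Since polynomials of Wiener integrals (equivalently, finite-chaos elements) are dense in $\mathbb{D}^{1,2}_W$, and since both sides agree on this dense class, the identity extends to every $F\in\mathbb{D}^{1,2}_W$.

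The main obstacle I expect is the bookkeeping in the chaos step, namely verifying that the integrand produced by the martingale representation of $I_n(f_n)$ matches the predictable projection of $D^W I_n(f_n)$ once the symmetrisation of $f_n$ is properly handled; this is standard but it is the only place where a concrete identification, rather than an abstract continuity/density argument, is needed. An entirely equivalent route would be via duality between $D^W$ and the Skorohod integral $\delta$, using the fact that $\delta$ restricted to adapted integrands coincides with the Itô integral and that $\delta\circ D^W$ acts on the $n$-th chaos by multiplication by $n$; I would keep this as a fallback if the chaos computation becomes cumbersome.
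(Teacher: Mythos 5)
Your proposal is correct, but note that the paper itself offers no proof of this statement: the Clark--Ocone--Haussman formula is quoted in Appendix A purely as a background tool, with the reader referred to Nualart (2006) for the theory. What you have written is essentially the standard textbook proof — martingale representation to get $F=E[F]+\int_0^T\phi_r\,dW_r$, identification of $\phi_r$ with $E_r[D_r^WF]$ on each Wiener chaos via $D_r^WI_n(f_n)=nI_{n-1}(f_n(\cdot,r))$ and the iterated-integral form of $I_n(f_n)$, and then a continuity-plus-density argument using the Itô isometry and conditional Jensen to pass from finite-chaos elements to all of $\mathbb{D}^{1,2}_W$ — and all three steps are sound; the bound $E[(\int_0^T E_r[D_r^WF]\,dW_r)^2]\le \|D^WF\|^2_{L^2([0,T]\times\Omega)}$ is exactly the right continuity estimate, and finite-chaos functionals are indeed dense in $\mathbb{D}^{1,2}_W$. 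One cosmetic slip: you write $L^2(\Omega,\mathcal{F}^W_T,P)\subset\mathbb{D}^{1,2}_W$, but the inclusion goes the other way, $\mathbb{D}^{1,2}_W\subset L^2(\Omega,\mathcal{F}^W_T,P)$; your argument only needs that $F\in\mathbb{D}^{1,2}_W$ is square-integrable and $\mathcal{F}^W_T$-measurable, so this is a typo rather than a gap. The duality route via the Skorohod integral that you mention as a fallback would work equally well; for the purposes of this paper either argument simply reproduces a classical result that the authors take as given.
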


We will also make use of the following anticipating It\^o's formula (see for example, Al\`os and Garc\'ia-Lorite (2021)), that allows us to work with non-adapted processes.

\begin{proposition}
\label{Ito}
Assume model \eqref{themodel} and $\sigma ^{2}\in \mathbb{L}^{1,2}_W$. Let
$F:[0,T]\times \mathbb{R}^{2}\rightarrow \mathbb{R}$ be a function
in $C^{1,2} ([0,T]\times \mathbb{R}^{2})$ such that there exists a
positive constant $C$ such that, for all $t\in [ 0,T]
,$ $F$ and its partial derivatives evaluated in $(
t,X_{t},Y_{t})$ are bounded by $C.$ Then it follows that
\begin{eqnarray}
F(t,X_{t},Y_{t}) &=&F(0,X_{0},Y_{0})+\int_{0}^{t}{\partial _{s}F}%
(s,X_{s},Y_{s})ds \nonumber\\
&&-\int_{0}^{t}{\partial _{x}F}(s,X_{s},Y_{s}) \frac{\sigma_{s}^{2}}{2}ds \nonumber\\
&&+\int_{0}^{t}{\partial _{x}F}(s,X_{s},Y_{s})\sigma _{s}(\rho dW_{s}+\sqrt{%
1-\rho ^{2}}dB_{s}) \nonumber\\
&&-\int_{0}^{t}{\partial _{y}F}(s,X_{s},Y_{s})\sigma
_{s}^{2}ds+\rho
\int_{0}^{t}{\partial _{xy}^{2}F}(s,X_{s},Y_{s})\Xi _{s}ds \nonumber\\
&&+\frac{1}{2}\int_{0}^{t}{\partial
_{xx}^{2}F}(s,X_{s},Y_{s})\sigma_{s}^{2}ds ,
\label{aito}
\end{eqnarray}
where $\Xi _{s}:=(\int_{s}^{t}D^W_{s}\sigma _{r}^{2}dr)\sigma _{s}.$ 
\end{proposition}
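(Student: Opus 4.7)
The target is an anticipating Itô formula for $F(s, X_s, Y_s)$, where $X$ is the adapted Itô process of \eqref{themodel} and $Y$ is an anticipating process of finite variation in $s$ with $Y_s' = -\sigma_s^2$ (so that $Y$ is a backward integral of $\sigma^2$). The extra term $\rho \int_0^t \partial_{xy}^2 F \, \Xi_s \, ds$ encodes the Malliavin ``cross-variation'' between the $W$-component of $X$ and the anticipating process $Y$. The plan is to combine a time-discretization and second-order Taylor expansion of $F$ with Skorohod/Malliavin duality (via the Clark--Ocone formula stated above) to extract this correction term.

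Step 1 (smoothing). Since the argument will apply the chain rule to $\partial_{xy}^2 F(s, X_s, Y_s)$ --- requiring one more derivative of $F$ than is assumed --- and integration by parts with respect to $W$, first regularize $F$ by convolution with a $C^\infty$ mollifier to $F^\epsilon$ and approximate $\sigma^2$ by smooth cylindrical Malliavin-differentiable processes (possible since $\sigma^2 \in \mathbb{L}^{1,2}_W$). The assumed uniform boundedness of $F$ and its partial derivatives evaluated at $(s, X_s, Y_s)$ allows removal of the smoothing at the end by dominated convergence.

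Step 2 (discretization and classical terms). For a partition $0 = s_0 < \cdots < s_n = t$ with mesh $|\Delta| \to 0$, telescope
\[
F(t, X_t, Y_t) - F(0, X_0, Y_0) = \sum_{i=0}^{n-1} \bigl[ F(s_{i+1}, X_{s_{i+1}}, Y_{s_{i+1}}) - F(s_i, X_{s_i}, Y_{s_i}) \bigr]
\]
and apply a second-order Taylor expansion. As $|\Delta| \to 0$, the sums for $\partial_s F \Delta_i s$, $\partial_x F \Delta_i X$, $\partial_y F \Delta_i Y$ and $\tfrac12 \partial_{xx}^2 F (\Delta_i X)^2$ converge to the standard Itô contributions in \eqref{aito}; in particular, since $Y_s$ is absolutely continuous with derivative $-\sigma_s^2$, $\sum_i \partial_y F \Delta_i Y \to -\int_0^t \partial_y F\, \sigma_s^2 \, ds$; $\sum_i (\Delta_i Y)^2 \to 0$ because $Y$ is BV in $s$; and the third-order Taylor remainders vanish uniformly by the $C$-bound.

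Step 3 (the cross term, the main obstacle). The key term is $\sum_i \partial_{xy}^2 F(s_i, X_{s_i}, Y_{s_i}) \Delta_i X \Delta_i Y$. Splitting $\Delta_i X$ into drift, $B$-martingale and $W$-martingale increments, the drift contribution is $O(|\Delta|)\to 0$; the $B$-part vanishes in $L^2$ because $\partial_{xy}^2 F(s_i, X_{s_i}, Y_{s_i}) \Delta_i Y$ is $\mathcal{F}_\tau^W \vee \mathcal{F}_{s_i}^B$-measurable and therefore independent of the future-$B$-increment $\Delta_i B$. For the $W$-part, Itô-orthogonality with $\Delta_i W$ fails because $Y_{s_i}$ depends on $W|_{(s_i, \tau]}$; applying the Malliavin duality $E[G \Delta_i W] = \int_{s_i}^{s_{i+1}} E[D_r^W G]\, dr$ to $G = \partial_{xy}^2 F(s_i, X_{s_i}, Y_{s_i}) \sigma_{s_i} \Delta_i Y$, and using that $D_r^W X_{s_i} = D_r^W \sigma_{s_i} = 0$ for $r > s_i$ (by adaptedness of $\sigma$), the sum converges to
\[
\rho \int_0^t E\bigl[ \partial_{xy}^2 F(s, X_s, Y_s)\, \sigma_s \, D_s^W Y_s \bigr]\, ds = \rho \int_0^t E[\partial_{xy}^2 F(s, X_s, Y_s) \, \Xi_s ]\, ds,
\]
via the defining identity $\sigma_s D_s^W Y_s = \Xi_s$. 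The technical heart of the proof is obtaining uniform-in-partition $L^2$-control of the Malliavin integration-by-parts remainder, which follows from the bounds on $F$'s derivatives together with $\sigma^2 \in \mathbb{L}^{1,2}_W$. Combining all limits and removing the $\epsilon$-smoothing yields \eqref{aito}.
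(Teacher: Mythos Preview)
The paper does not prove this proposition; it is stated in Appendix~A with a reference to Al\`os and Garc\'ia-Lorite (2021). So there is no paper proof to compare against, only the correctness of your sketch to assess.

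Your argument locates the anticipating correction $\rho\int_0^t\partial_{xy}^2F\,\Xi_s\,ds$ in the wrong place. The second-order cross sum $\sum_i \partial_{xy}^2 F(s_i,X_{s_i},Y_{s_i})\,\Delta_i X\,\Delta_i Y$ actually vanishes: $Y$ has bounded variation, so $\Delta_i Y=-\int_{s_i}^{s_{i+1}}\sigma_u^2\,du=O(|\Delta|)$, while the $W$-part of $\Delta_i X$ is $O(|\Delta|^{1/2})$, giving a total contribution of order $|\Delta|^{1/2}\to 0$. Your own duality computation confirms this if carried through: $D_r^W$ applied to $G=\partial_{xy}^2F(s_i,X_{s_i},Y_{s_i})\,\sigma_{s_i}\,\Delta_i Y$ (for $r\in(s_i,s_{i+1})$) produces either $\partial_{xyy}^3F\cdot D_r^W Y_{s_i}\cdot\sigma_{s_i}\cdot\Delta_i Y$, which still carries the small factor $\Delta_i Y$, or $\partial_{xy}^2F\cdot\sigma_{s_i}\cdot D_r^W\Delta_i Y$ with $D_r^W\Delta_i Y=-\int_{s_i}^{s_{i+1}}D_r^W\sigma_u^2\,du=O(|\Delta|^{H+1/2})$. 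Both sums tend to zero; neither yields $\partial_{xy}^2F\cdot\sigma_s\cdot D_s^WY_s$ without a residual small factor.

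The correction instead comes from the \emph{first}-order sum $\rho\sum_i\partial_xF(s_i,X_{s_i},Y_{s_i})\,\sigma_{s_i}\,\Delta_iW$, which your Step~2 dismisses as a ``standard It\^o contribution''. Since $\partial_xF(s_i,X_{s_i},Y_{s_i})$ is not $\mathcal{F}_{s_i}^W$-measurable (through $Y_{s_i}$), this forward Riemann sum does not converge to the Skorohod integral $\int_0^t\partial_xF\,\sigma_s\,dW_s$ implicit in \eqref{aito}; the discrepancy is precisely the trace $\rho\int_0^t D_s^{W}\!\bigl(\partial_xF(s,X_s,Y_s)\sigma_s\bigr)\,ds$, which by adaptedness of $X_s,\sigma_s$ reduces to $\rho\int_0^t\partial_{xy}^2F\cdot\sigma_s\cdot D_s^WY_s\,ds=\rho\int_0^t\partial_{xy}^2F\,\Xi_s\,ds$. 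A secondary issue is that your Step~3 uses the duality $E[G\,\Delta_iW]=\int E[D_r^WG]\,dr$, which only gives identities in expectation; \eqref{aito} is an almost-sure statement and requires $L^2$-convergence of the Riemann sums to the Skorohod integral, not merely convergence of their means.
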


\section{Previous results}\label{appendix12}
Here we recall this decomposition formula for the implied volatility in the correlated case.
\begin{proposition}[Theorem 9 in Al\`os and Shiraya (2019)]
\label{Theoremcorrelatedcase}Consider the model \eqref{themodel} and assume
that hypotheses (H1) and (H2') hold for some $H \in
(0,1) $. Then, for every $k\in \mathbb{R}$,
\begin{eqnarray}
I\left( t,T,X_t,k\right) &=&I^{0}\left( t,T,X_t,k\right)
\nonumber \\
&&\hspace{-0.5cm}+\frac{\rho }{2}E_t \left[\int_{t}^{T}( BS^{-1}) ^{\prime }( k,\Gamma _{s}) H(s,T,X_{s},k,v_{s})\Phi _{s}ds\right],
\label{impliedrel}
\end{eqnarray}%
where $I^{0}( t,T,X_t,k) $ denotes the implied
volatility in the uncorrelated case $\rho =0$,
\[
\Gamma _{s}:=E_{t}[ BS(t,T,X_t,k,v_{t})] +\frac{\rho }{2}%
E_{t}\left[\int_{t}^{s} H(r,T,X_{r},k,v_{r})\Phi _{r}dr\right], 
\]%
and $\Phi _{t}:=\sigma _{t}\int_{t}^{T}D_{t}^{W}\sigma _{r}^{2}dr.$ 
\end{proposition}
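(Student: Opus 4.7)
The plan is to combine the exact Malliavin decomposition from Proposition \ref{Theoremcorrelatedcase2} with the uncorrelated bound from Theorem \ref{uncorrelated}, and then expand the correlated piece to second order in $\rho$. Evaluating Proposition \ref{Theoremcorrelatedcase2} at $k=\hat k_t$ gives
\begin{equation*}
I(t,T,\tau,\hat k_t)-E_t[v_T]=\bigl(I^0(t,T,\tau,\hat k_t)-E_t[v_T]\bigr)+\frac{\rho}{2}\,J(\tau),
\end{equation*}
with
\begin{equation*}
J(\tau):=E_t\left[\int_T^\tau (BS^{-1})'(\hat k_t,\Gamma_s)\,H(s,\tau,X_s-X_T,\hat k_t,v_s)\,\zeta_s\,ds\right].
\end{equation*}
Theorem \ref{uncorrelated} bounds the first summand by $O((\tau-T)^{4H+1})+O((\tau-T)^{2H+1}(\tau-t)^{2H})+O((\tau-T)(\tau-t)^{4H})$. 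Since $H<\tfrac12$, each of the three exponents of $\tau-T$ strictly exceeds $2H$, so after dividing by $(\tau-T)^{2H}$ and letting $\tau\to T$ (with $t$ fixed), the uncorrelated piece tends to zero. It remains to analyse $\tfrac{\rho}{2(\tau-T)^{2H}}J(\tau)$.

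I would next perform a double Taylor expansion in $\rho$. The identity
\begin{equation*}
\Gamma_s=\Gamma_s^{(0)}+\frac{\rho}{2}\,E_t\left[\int_T^s H(u,\tau,X_u-X_T,\hat k_t,v_u)\,\zeta_u\,du\right],\qquad \Gamma_s^{(0)}:=E_t[BS(T,\tau,0,\hat k_t,v_T)],
\end{equation*}
combined with a first-order Taylor expansion of $(BS^{-1})'(\hat k_t,\cdot)$ about $\Gamma_s^{(0)}$, inserts an explicit $\rho$ inside the integrand of $J$. Simultaneously, a Taylor expansion of $H(s,\tau,X_s-X_T,\hat k_t,v_s)$ around $X_s-X_T=0$ produces powers of $X_s-X_T$. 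Conditionally on $\mathcal{F}^W_\tau$, $X_s-X_T$ is Gaussian with mean $-\tfrac12\int_T^s\sigma_u^2\,du+\rho\int_T^s\sigma_u\,dW_u$ and variance $(1-\rho^2)\int_T^s\sigma_u^2\,du$, so the $\rho$-linear drift adds further explicit factors of $\rho$. Collecting contributions of order $\rho^2$ yields three generic terms: a $(BS^{-1})'\cdot\partial_xH$ cross term (from expanding $X_s-X_T$ to first order with $(BS^{-1})'$ frozen at $\Gamma_s^{(0)}$), a $(BS^{-1})''\cdot H$ cross term (from expanding $\Gamma_s$ to first order in $\rho$), and a $(BS^{-1})'\cdot\partial_x^2 H$ quadratic Taylor term (from squaring the $\rho$-drift of $X_s-X_T$).

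The next step is to evaluate these Greeks asymptotically at the zero vanna strike. Since $d_2(\hat k_t,I(t,T,\tau,\hat k_t))=0$ forces $\hat k_t=-I^2(\tau-T)/2=O(\tau-T)$, and since $d_1=\sigma\sqrt{\tau-s}$ at $d_2=0$, the BS identity $e^xN'(d_1)=e^kN'(d_2)$ reduces $(BS^{-1})'$, $(BS^{-1})''$, and $H,\partial_xH,\partial_x^2H$ to explicit closed-form functions of $\sigma$ and $\sqrt{\tau-s}$. Inserting these, using $v_T\to\sigma_T$ pointwise as $\tau\to T$ (which produces the normalisations $1/\sigma_T$ and $1/\sigma_T^3$), and using the Malliavin bound $|D^W_s\sigma_r|\lesssim(\tau-s)^{H-1/2}$ implied by (H2') to identify the $(\tau-T)^{2H}$ scaling of the $\zeta$-factors, the three surviving contributions reproduce exactly the three integrand structures on the right-hand side of the theorem, with coefficients $3/8$, $-\tfrac12$, $-\tfrac12$. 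Hypothesis (H3) then ensures each rescaled integral admits a finite limit as $\tau\to T$.

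The main obstacle is the bookkeeping of the prefactors $3/8,-\tfrac12,-\tfrac12$: they arise only after delicate cancellations among the three cross terms, each of which carries a different power of $\sigma$ and of $(\tau-s)^{-1/2}$ at the zero vanna strike. A subordinate but essential task is to verify that higher-order Taylor remainders and the $B$-driven component of $X_s-X_T$ (which enters only through its variance after conditioning on $\mathcal{F}^W_\tau$) contribute only $o((\tau-T)^{2H})$; this follows from (H1), (H2') and the fourth quantity of (H3) by routine Malliavin estimates together with the uniform bounds $a\le\sigma\le b$.
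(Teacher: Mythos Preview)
Your proposal does not prove the stated Proposition \ref{Theoremcorrelatedcase}. That proposition is an exact decomposition identity for the (spot-starting) implied volatility, expressing $I(t,T,X_t,k)$ as the uncorrelated implied volatility $I^0$ plus a $\rho$-correction integral involving $(BS^{-1})'$, $H$ and $\Phi$. In the paper it is quoted in Appendix~B as a previous result (Theorem~9 of Al\`os and Shiraya (2019)) and no proof is given. A proof of the identity would apply the anticipating It\^o formula to the process $s\mapsto BS(s,T,X_s,k,v_s)$ together with the Hull--White formula, isolating the $\rho$-term that comes from the cross-variation between $X$ and $Y$, and then invert via $BS^{-1}$.

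What you have actually sketched is a proof of Theorem \ref{themaintheorem}: you take Proposition \ref{Theoremcorrelatedcase2} (the forward-start analogue of the statement you were asked to prove) as an input, invoke Theorem \ref{uncorrelated} for the uncorrelated piece, and then compute the limit of $(I(t,T,\tau,\hat k_t)-E_t[v_T])/(\tau-T)^{2H}$, aiming for the coefficients $3/8,-\tfrac12,-\tfrac12$. Even viewed as an attempt at Theorem \ref{themaintheorem}, your route differs from the paper's: the paper does not Taylor-expand in $\rho$ or in $X_s-X_T$, but instead iterates the anticipating It\^o formula (first on $H(s,\tau,X_s-X_T,\hat k_t,v_s)J_s$, then on $(\partial_x^3-\partial_x^2)H\cdot Z_s$ and $\partial_xH\cdot R_s$) to extract the leading terms exactly. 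Your outline leaves the ``delicate cancellations'' producing $3/8,-\tfrac12,-\tfrac12$ merely asserted, and controlling the Taylor remainder of $H$ in $X_s-X_T$ (which is not pathwise small) would require more than the routine estimates you invoke.
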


\section{Proofs}\label{appendix2}

\begin{proof}[Proof of Proposition \ref{General}]
This proof is decomposed in several steps.\\

\noindent {\it Step 1} Firstly, we show that
\begin{equation}
\label{primerpas}
I\left( t,T,\tau,\hat{k}_t\right)  =E_{t}\left[ v_{T}\right] +\frac{1}{2} E_t
\left[\int_{t}^{\tau} \left( BS^{-1}\right) ^{\prime \prime}\left( \hat{k}_t, \Lambda_r\right)U_{r}^{2}dr\right].
\end{equation}
In $\rho=0$, the Hull and White formula gives $V_t=\Lambda_t$. Then, as in the proof of Proposition 3.1 in Al\`os and Shiraya (2019),
\begin{equation}
\label{impliedmartingale}
I(t,T,\tau,\hat{k}_t)=BS^{-1}(\hat{k}_t,\Lambda_t)=E_t[BS^{-1}(\hat{k}_t,\Lambda_t)].
\end{equation}
From (H2) and the Clark-Ocone formula (see Appendix \ref{appendix1}),  $\Lambda$ has the martingale representation given by
\begin{eqnarray}
d\Lambda_r&=&E_r[D_r^W(BS(t,T,\tau,0,\hat{k}_t,v_T)]dW_r \nonumber\\
&=&E_r\left[\frac{\partial BS}{\partial \sigma}(t,T,\tau,0,\hat{k}_t,v_T)\frac{1}{2v_T(\tau-T)}\int_{r\vee T}^\tau D_r^W\sigma_s^2 ds\right]dW_r \nonumber\\
&=&U_rdW_r.
\end{eqnarray}
After taking expectations, the classical It\^o's formula gives
\begin{eqnarray}
\label{Itoimplied}
E_t[BS^{-1}(\hat{k}_t,\Lambda_t)]
=E_t[BS^{-1}(\hat{k}_t,\Lambda_\tau)]-\frac12 E_t \left[ \int_{t}^{\tau} \left( BS^{-1}\right) ^{\prime \prime}\left( \hat{k}_t, \Lambda_r\right)d\langle \Lambda,\Lambda\rangle_r\right].
\end{eqnarray}
Now, 
$\Lambda_\tau=BS\left( t,T,\tau,0,\hat{k}_t,v_{T}\right)$, (\ref{impliedmartingale}) and (\ref{Itoimplied}) imply that
\begin{eqnarray}
\label{Itoimplied}
I(t,T,\tau,\hat{k}_t)=E_{t}\left[ v_{T}\right] -\frac12 E_t \left[ \int_{t}^{\tau} \left( BS^{-1}\right) ^{\prime \prime}\left( \hat{k}_t, \Lambda_r\right)d\langle \Lambda,\Lambda\rangle_r\right].
\end{eqnarray}
That is, $$
I(t,T,\tau,\hat{k}_t)=E_{t}\left[ v_{T}\right] -\frac12 E_t \left[ \int_{t}^{\tau} \left( BS^{-1}\right) ^{\prime \prime}\left( \hat{k}_t, \Lambda_r\right)U_r^2dr\right].
$$
{\it Step 2} Let us see
\begin{eqnarray}
\label{secondstep}
\lefteqn{E_t \left[\int_{t}^{\tau} \left( BS^{-1}\right) ^{\prime \prime}\left(\hat{k}_t, \Lambda_r\right)U_{r}^{2}dr\right]}\nonumber\\
&=&E_t\Bigg[\int_{t}^{\tau}\left( BS^{-1}\left( \hat{k}_t, \Lambda_r\right)\right) ^{\prime \prime\prime} (D^-A)_r U_r dr\Bigg]\nonumber\\
&&+\frac12E_t\Bigg[
\int_{t}^{\tau}\left( BS^{-1}\left( \hat{k}_t, \Lambda_r\right)\right) ^{(iv)} A_r U_r^2 dr \Bigg].
\label{I-E[v]}
\end{eqnarray}
To this end, we apply the anticipating It\^o's formula (see Appendix \ref{appendix1}) to 
$$
 \left( BS^{-1}\right) ^{\prime \prime}\left( \hat{k}_t, \Lambda_r\right)A_r,
$$
and, taking this into account that $dA_r=-\frac12U_r^2dr$, we get 
\begin{eqnarray}
E\bigg[
\left( BS^{-1}\right) ^{\prime \prime}\left( \hat{k}_t, \Lambda_\tau \right)A_\tau
\bigg]
&=&E_t \bigg[\left( BS^{-1}\right) ^{\prime \prime}\left( \hat{k}_t, \Lambda_t\right)A_t \bigg]\nonumber\\
&&-\frac{1}{2} E_t\Bigg[
\int_{t}^{\tau} \left( BS^{-1}\right) ^{\prime \prime}\left( \hat{k}_t, \Lambda_r\right)U_{r}^{2}dr\Bigg]\nonumber\\
&&+\frac12 E_t\bigg[\int_{t}^{\tau}\left( BS^{-1}\left( \hat{k}_t, \Lambda_r\right)\right) ^{\prime \prime\prime} \left(\int^\tau_r D_r^W U_s^2 ds\right) U_r dr\Bigg]\nonumber\\
&&+\frac12 E_t\Bigg[
\int_{t}^{\tau}\left( BS^{-1}\left( \hat{k}_t, \Lambda_r\right)\right) ^{(iv)} A_r U_r^2 dr \Bigg]
\end{eqnarray}
In particular, 
$$\left( BS^{-1}\right) ^{\prime \prime }\left(\hat{k}_t,{\Lambda _{t}}\right)=0,$$
and 
\begin{eqnarray}
\left( BS^{-1}\right) ^{\prime \prime }\left(k_t,{\Lambda _{r}}%
\right) &=&
\frac{(\Theta_r(k))^4(\tau-T)^2-4(1-k_t)^2}{4\left( \exp(X_{t})N^{\prime }(d_{+}\left( k_t,\Gamma^t_r\right))(\tau-T) \right)^{2} (\Theta_r(k))^3}
\end{eqnarray}
and
\begin{eqnarray}
\left( BS^{-1}\right) ^{\prime \prime }\left(\hat{k}_t,{\Lambda _{\tau}}\right)
=\frac{(v_T^4-(I(t,T,\tau,\hat{k}_t))^4)}{4\left( \exp(1)N^{\prime }(d_{+}\left( \hat{k}_t,v_T\right)) \right)^{2} v_T^3},\label{bs-1''}
\end{eqnarray}
and $\left( BS^{-1}\right) ^{\prime \prime }\left(\hat{k}_t,\Lambda_\tau\right)A_\tau = 0$.

This gives us
\begin{eqnarray}
I\left( t,T,\tau,\hat{k}_t\right) &=& E_{t}\left[ v_{T}\right] 
+\frac12 E_t\bigg[\int_{t}^{\tau}\left( BS^{-1}\left( \hat{k}_t, \Lambda_r\right)\right) ^{\prime \prime\prime} \left(2 \int^\tau_r U_s D_r^W U_s ds\right) U_r dr\Bigg]\nonumber\\
&&+\frac{1}{2} E_t\Bigg[
\int_{t}^{\tau}\left( BS^{-1}\left( \hat{k}_t, \Lambda_r\right)\right) ^{(iv)} A_r U_r^2 dr \Bigg],
\label{I-E[v]}
\end{eqnarray}
and completes the proof.
\end{proof}

\begin{proof}[Proof of Theorem \ref{uncorrelated}]
The proof is decomposed in several steps.\\

\noindent{\it Step 1} Firstly, we show that
\begin{eqnarray}
\label{impliedexpansion}
I\left( t,T,\tau,\hat{k}_t\right) &=& E_{t}\left[ v_{T}\right] \nonumber\\
&&+\frac12\left( BS^{-1}\left( \hat{k}_t, \Lambda_t\right)\right) ^{\prime \prime\prime} E_t\Bigg[\int_{t}^{\tau}(D^-A)_r U_r dr\Bigg]\nonumber\\
&&+\frac14 \left( BS^{-1}\left( \hat{k}_t, \Lambda_t\right)\right) ^{(iv)} E_t\Bigg[
\int_{t}^{\tau} A_r U_r^2 dr \Bigg]\nonumber\\
&&+T_1+T_2+T_3+T_4,
\end{eqnarray}
where
\begin{eqnarray*}
T_1&=&E_t\Bigg[\int_{t}^{\tau}\left( BS^{-1}\left( \hat{k}_t, \Lambda_r\right)\right) ^{(iv)}(D^-\Psi)_r U_r dr\Bigg],
\\
T_2&=& \frac12 E_t\Bigg[\int_{t}^{\tau}\left( BS^{-1}\left( \hat{k}_t, \Lambda_r\right)\right) ^{(v)}\Psi_r U_r^2 dr\Bigg],
\\
T_3&=&E_t\Bigg[\int_{t}^{\tau}\left( BS^{-1}\left( \hat{k}_t, \Lambda_r\right)\right) ^{(v)}(D^-\Phi)_r U_r dr\Bigg],
\end{eqnarray*}
and
\begin{eqnarray*}
T_4&=& \frac12 E_t\Bigg[\int_{t}^{\tau}\left( BS^{-1}\left( \hat{k}_t, \Lambda_r\right)\right) ^{(vi)}\Phi_r U_r^2 dr\Bigg],
\end{eqnarray*}
with $\Psi_t:=\int_{t}^{\tau}(D^-A)_r U_r dr$ and $\Phi_t:=\int_{t}^{\tau} A_r U_r^2 dr$.
To this end, we apply the anticipating It\^o's formula to the processes
$$
\left( BS^{-1}\left( \hat{k}_t, \Lambda_t\right)\right) ^{\prime \prime\prime} \int_{t}^{\tau}(D^-A)_r U_r dr=:\left( BS^{-1}\left( \hat{k}_t, \Lambda_t\right)\right) ^{\prime \prime\prime}\Psi(t),
$$
and 
$$
\frac14 \left( BS^{-1}\left( \hat{k}_t, \Lambda_r\right)\right) ^{(iv)}
\int_{t}^{\tau} A_r U_r^2 dr =:\frac14 \left( BS^{-1}\left( \hat{k}_t, \Lambda_r\right)\right) ^{(iv)}\Phi(t).
$$
Then, the same arguments as in the proof of  Proposition \ref{General} give us
\begin{eqnarray}
\lefteqn{E_t\Bigg[\int_{t}^{\tau}\left( BS^{-1}\left( \hat{k}_t, \Lambda_r\right)\right) ^{\prime \prime\prime}(D^-A)_r U_r dr\Bigg]}\nonumber\\
&=&\left( BS^{-1}\left( \hat{k}_t, \Lambda_t\right)\right) ^{\prime \prime\prime} E_t\Bigg[\int_t^\tau(D^-A)_r U_r dr\Bigg]\nonumber\\
&&+ E_t\Bigg[\int_{t}^{\tau}\left( BS^{-1}\left( \hat{k}_t, \Lambda_r\right)\right) ^{(iv)}(D^-\Psi)_r U_r dr\Bigg]\nonumber\\
&&+ \frac12 E_t\Bigg[\int_{t}^{\tau}\left( BS^{-1}\left( \hat{k}_t, \Lambda_r\right)\right) ^{(v)}\Psi_r U_r^2 dr\Bigg]\nonumber\\
&=&\left( BS^{-1}\left( \hat{k}_t, \Lambda_t\right)\right) ^{\prime \prime\prime} E_t\Bigg[\int_t^\tau(D^-A)_r U_r dr\Bigg]+T_1+T_2,
\end{eqnarray}
and
\begin{eqnarray}
\lefteqn{E_t\Bigg[
\int_{t}^{\tau}\left( BS^{-1}\left( \hat{k}_t, \Lambda_r\right)\right) ^{(iv)} A_r U_r^2 dr \Bigg]}\nonumber\\
&=&\left( BS^{-1}\left( \hat{k}_t, \Lambda_t\right)\right) ^{(iv)} E_t\Bigg[
\int_{t}^{\tau}A_r U_r^2 dr \Bigg]\nonumber\\
&&+E_t\Bigg[
\int_{t}^{\tau}\left( BS^{-1}\left( \hat{k}_t, \Lambda_r\right)\right) ^{(v)} (D^-\Phi)_r U_r dr \Bigg]\nonumber\\
&&+\frac12 E_t\Bigg[\int_{t}^{\tau}\left( BS^{-1}\left( \hat{k}_t, \Lambda_r\right)\right) ^{(vi)}\Phi_r U_r^2 dr\Bigg]\nonumber\\
&=&\left( BS^{-1}\left( \hat{k}_t, \Lambda_t\right)\right) ^{(iv)} E_t\Bigg[
\int_{t}^{\tau}A_r U_r^2 dr \Bigg]+T_3+T_4.
\end{eqnarray}
{\it Step 2} Next, let us consider the term
$$
\left( BS^{-1}\left( \hat{k}_t, \Lambda_t\right)\right) ^{\prime \prime\prime} E_t\Bigg[\int_t^\tau (D^-A)_r U_r dr\Bigg].
$$
On one hand,
\begin{eqnarray}
\label{terceraderivada}
\lefteqn{\left( BS^{-1}\left( \hat{k}_t, \Lambda_t\right)\right) ^{\prime \prime\prime}}\nonumber\\
&=&\frac{-\frac{\partial^3 BS}{\partial \sigma^3}(\hat{k}_t, I(t,T,\tau,\hat{k}_t))\left(\frac{\partial BS}{\partial \sigma}(\hat{k}_t, I(t,T,\tau,\hat{k}_t))\right)^3
+3\left(\frac{\partial^2 BS}{\partial \sigma^2}(\hat{k}_t, I(t,T,\tau,\hat{k}_t))\right)^2\left(\frac{\partial BS}{\partial \sigma}(\hat{k}_t, I(t,T,\tau,\hat{k}_t))\right)^2
}{\left(\frac{\partial BS}{\partial \sigma}(\hat{k}_t, I(t,T,\tau,\hat{k}_t))\right)^7}\nonumber\\
&=&\frac{-\frac{\partial^3 BS}{\partial \sigma^3}(\hat{k}_t, I(t,T,\tau,\hat{k}_t))}{\left(\frac{\partial BS}{\partial \sigma}(\hat{k}_t, I(t,T,\tau,\hat{k}_t))\right)^4} 
+o\left((\tau-T)^{-\frac{1}{2}}\right)\nonumber\\
&=&(2\pi)^{\frac32}\exp\left(-3+\frac{3}{2}(I(t,T,\tau,\hat{k}_t))^2(\tau-T)\right)(\tau-T)^{-\frac12}+o\left((\tau-T)^{-\frac{1}{2}}\right).
\end{eqnarray}
On the other hand,
\begin{eqnarray}
\label{DA}
(D^-A)_r&=&\int_r^\tau D_r^WU_s^2ds=2\int_r^\tau U_sD_r^WU_sds.
\end{eqnarray}
The vega-delta-gamma relationship gives us
\begin{eqnarray}
\label{U}
U_s&=&E_s\left[\frac{\partial BS}{\partial \sigma}(T,\tau,0,\hat{k}_t,v_T)\frac{1}{2v_T(\tau-T)}\int_{s\vee T}^\tau D_s^W\sigma_u^2 du\right]  \nonumber\\
&=&\frac12    E_s\left[G(T,\tau,0,\hat{k}_t,v_T)\int_{s\vee T}^\tau D_s^W\sigma_u^2 du\right],
\end{eqnarray}
and
\begin{eqnarray}
\label{DU}
D_r^WU_s&=& E_s\Bigg[ \frac12 
G(T,\tau,0,\hat{k}_t,v_T)\left(\frac{d_1(\hat{k}_t,v_T)d_2(\hat{k}_t,v_T)}{2v_T(\tau-T)} - \frac{1}{2v_T^{2}(\tau-T)}\right)
\left(\int_{s\vee T}^\tau D_s^W\sigma_u^2 du\right)\left(\int_{r\vee T}^\tau D_r^W\sigma_u^2 du\right)
\nonumber\\
&&+\frac12 G(T,\tau,0,\hat{k}_t,v_T)\left(\int_{r\vee s\vee T}^\tau D_r^WD_s^W\sigma_u^2 du\right)\Bigg]\nonumber\\
\end{eqnarray}

Here, since (H2),
\begin{eqnarray}
\int^\tau_{s\vee T} D_s^W\sigma_u^2 du 
&\le& \int^\tau_{s\vee T} (\tau-s)^{H-\frac12}du\nonumber\\
&=& 1_{s>T}C(\tau-s)^{H+\frac12} + 1_{s<T} C(\tau - T)(\tau - s)^{H-\frac12},
\end{eqnarray}
where $C$s are different constants in each term, and
\begin{eqnarray}
\lefteqn{\int_r^\tau \int^\tau_{s\vee T} D_s^W\sigma_u^2 du \int^\tau_{s\vee T} D_s^W\sigma_u^2 du ds}\nonumber\\
&\le&
\int_r^\tau \int^\tau_{s\vee T} (\tau - s)^{H-\frac12} du \int^\tau_{s\vee T}  (\tau - s)^{H-\frac12} du ds
\nonumber\\
&=& 1_{r>T}C(\tau - r)^{2H+2} + 1_{r<T} C (\tau-T)^{2}(\tau - r)^{2H}
.\label{dv2dv2}
\end{eqnarray}

From the equation for $G$, the zero-vanna relationship $x-k = \frac12 I(t,T,\tau,\hat{k}_t) \sqrt{\tau-T}$ and (H2'), we deduce 

\begin{eqnarray}
\label{DA2}
(D^-A)_r
&=&
\frac12
\int_r^\tau E_s\left[\frac{e^{1}N'(d_2(\hat{k},v_T))}{v_T\sqrt{\tau-T}}\int_{s\vee T}^\tau D_s^W\sigma_u^2 du \right] \nonumber\\
&&\times E_s \Bigg[
\frac{e^{1}N'(d_2(\hat{k},v_T))}{v_T\sqrt{\tau-T}}
\frac{-1}{2 v_T^{2} (\tau-T)} 
\left(\int_{s\vee T}^\tau D_s^W\sigma_u^2 du\right)\left(\int_{r\vee T}^\tau D_r^W\sigma_u^2 du\right)
\Bigg] ds\nonumber\\
&&+
\frac12
\int_r^\tau E_s\left[\frac{e^{1}N'(d_2(\hat{k},v_T))}{v_T\sqrt{\tau-T}}\int_{s\vee T}^\tau D_s^W\sigma_u^2 du \right] \nonumber\\
&&\times E_s \Bigg[
\frac{e^{1}N'(d_2(\hat{k},v_T))}{v_T\sqrt{\tau-T}}
\frac{\frac14 (I(t,T,\tau,\hat{k}_t)^2 - v_t^2) (\tau-T)}{2 v_T^{2} (\tau-T)} 
\left(\int_{s\vee T}^\tau D_s^W\sigma_u^2 du\right)\left(\int_{r\vee T}^\tau D_r^W\sigma_u^2 du\right)
\Bigg] ds\nonumber\\
&&+
\frac12
\int_r^\tau E_s\left[\frac{e^{1}N'(d_2(\hat{k},v_T))}{v_T\sqrt{\tau-T}}\int_{s\vee T}^\tau D_s^W\sigma_u^2 du \right] \nonumber\\
&&\times E_s \Bigg[
\frac{e^{1}N'(d_2(\hat{k},v_T))}{v_T\sqrt{\tau-T}}
\left(\int_{r\vee s\vee T}^\tau D_r^WD_s^W\sigma_u^2 du\right)
\Bigg] ds\nonumber\\
&=& F_1+F_2+F_3.
\end{eqnarray}

Since \eqref{dv2dv2},
\begin{eqnarray}
F_1 
&=& O\left(\frac{C}{(\tau-T)^2}\int_r^{\tau} E_s\Bigg[ \int^\tau_{s\vee T} D_s^W\sigma_u^2 du\Bigg] E_s\Bigg[ \int^\tau_{s\vee T} D_s^W\sigma_u^2 du \int^\tau_{r\vee T} D_s^W\sigma_u^2 du\Bigg]ds\right) \nonumber\\
&=&
O\Bigg((\tau-T)^{-2}\Big\{1_{r>T}C(\tau - r)^{2H+2} + 1_{r<T} C (\tau-T)^{2}(\tau - r)^{2H}\Big\}
\Big\{ 1_{r>T}C(\tau-r)^{H+\frac12} + 1_{r<T} C(\tau - T)(\tau - r)^{H-\frac12} \Big\}\Bigg)
\nonumber\\
&=&
O\Bigg((\tau-T)^{-2}\Big\{1_{r>T}C(\tau-r)^{3H+\frac52} +  1_{r<T}(\tau - r)^{3H-\frac12}(\tau - T)^{3} \Big\}\Bigg)\nonumber\\
&=& O(\tau-T)^{-2}(\tau - r)^{3H-\frac12}(\tau - r\vee T)^2,\\
F_2&=&(\tau - T)F_1\\
F_3 
&=& O\left(\frac{C}{(\tau-T)}\int_r^{\tau} E_s\Bigg[ \int^\tau_{s\vee T} D_s^W\sigma_u^2 du\Bigg] E_s\Bigg[ \int^\tau_{s\vee T} D_r^W D_s^W\sigma_u^2 du \Bigg]ds\right) \nonumber\\
&=& O\Bigg(\frac{C}{(\tau-T)}\int_r^{\tau}  \left(1_{s>T}C(\tau-s)^{H+\frac12} + 1_{s<T} C(\tau - T)(\tau - s)^{H-\frac12}\right)\nonumber\\
&& \times \left(1_{s>T}(\tau-r)^{H-\frac12}(\tau - s)^{H+\frac12} + 1_{s<T}(\tau - T)(\tau - r)^{2H-1}\right) ds\Bigg)\nonumber\\
&=& O\Bigg((\tau-T)^{-1}\left(1_{r>T}(\tau-r)^{3H+\frac32} + 1_{r<T} (\tau-r)^{3H-\frac12}(\tau - T)^{2} \right)\Bigg)\nonumber\\
&=& O(\tau - T)^{-1}(\tau - r)^{3H-\frac12}(\tau - r\vee T)^2
\end{eqnarray}
Thus,
\begin{eqnarray}
(D^-A)_r &=&
O(\tau-T)^{-2}(\tau-r)^{3H-\frac12}(\tau - r\vee T)^3
+ O(\tau - T)^{-1}(\tau - r)^{3H-\frac12}(\tau - r\vee T)^2
\end{eqnarray}
and
\begin{eqnarray}
U_r &=& 
O\Bigg((\tau-T)^{-\frac12}\left(1_{r>T}C(\tau-r)^{H+\frac12} + 1_{r<T} C(\tau - T)(\tau - r)^{H-\frac12}\right)\Bigg)\nonumber\\
&=& O(\tau - T)^{-\frac12}(\tau - r)^{H-\frac12}(\tau - r\vee T)
\end{eqnarray}

Thus,
\begin{eqnarray}
\int^\tau_t (D^- A)_r U_r dr &=& 
O\left(\int^\tau_t (\tau-T)^{-\frac52}(\tau - r)^{4H-1}(\tau - r\vee T)^4 dr + \int^\tau_t (\tau-T)^{-\frac32}(\tau - r)^{4H-1}(\tau - r\vee T)^3 dr\right)\nonumber\\
&=& O(\tau-T)^{4H+\frac32} + O(\tau - T)^{\frac32}(\tau - t)^{4H} 
\label{dau}
\end{eqnarray}

which implies that
\begin{eqnarray}
\lefteqn{E_t\Bigg[\int_t^\tau (D^-A)_r U_r dr\Bigg]}\nonumber\\
&=& 
O(\tau-T)^{4H+\frac32} + O(\tau - T)^{\frac32}(\tau - t)^{4H} .
\end{eqnarray}
Jointly with (\ref{terceraderivada}) and (H2') gives
\begin{eqnarray}
\lefteqn{\left( BS^{-1}\left( \hat{k}_t, \Lambda_t\right)\right) ^{\prime \prime\prime} E_t\Bigg[\int_t^\tau (D^-A)_r U_r dr\Bigg]}\nonumber\\
&=&
O(\tau-T)^{4H+1} + O(\tau - T)(\tau - t)^{4H}.
\end{eqnarray}

{\it Step 3} We calculate the term
$$
\frac14 \left( BS^{-1}\left( \hat{k}_t, \Lambda_r\right)\right) ^{(iv)} E_t\Bigg[\int_{t}^{\tau} A_r U_r^2 dr \Bigg].
$$
Here,
\begin{equation}
\label{derivada4}
\left( BS^{-1}\left( \hat{k}_t, \Lambda_t\right)\right) ^{(iv)}
=-(2\pi)^{2}\exp\left(-4X_t+2(I(r,T,\tau,0,\hat{k}_t))^2(\tau-T)\right)(\tau-T)^{-1} + o\left((\tau-T)^{-1}\right).
\end{equation}
On the other hand, 
\begin{eqnarray}
\lefteqn{E_t\Bigg[\int_{t}^{\tau} A_r U_r^2 dr \Bigg]}\nonumber\\
&=&E_t\Bigg[\int_{t}^{T} \left(\int_r^\tau U_s^2ds\right) U_r^2 dr \Bigg]\nonumber\\
&=&\frac12E_t\Bigg[ \left(\int_t^\tau U_r^2dr\right)^2  \Bigg]\nonumber\\
&=&
\frac12E_t\Bigg[ \left(\int_t^\tau \left(E_r\left[\frac{\partial BS}{\partial \sigma}(t,T,\tau,0,\hat{k}_t,v_T)\frac{1}{2v_T(\tau - T)}\int_{r\vee T}^\tau D_r^W\sigma_s^2 dr\right]\right)^2ds\right)^2  \Bigg]\nonumber\\
&=& 
\left(\int^\tau_t (\tau- T)^{-1}(\tau- r)^{2H-1}(\tau - r\vee T)^{2}dr\right)^2
\nonumber\\
&=& 
O(\tau- T)^{4H+2} + O(\tau- T)^{2H+2}(\tau - t)^{2H} + O(\tau- T)^{2}(\tau - t)^{4H}.
%
\end{eqnarray}
Together with (\ref{derivada4}), this gives us 
$$
\left( BS^{-1}\left( \hat{k}_t, \Lambda_t\right)\right) ^{(iv)} E_t\Bigg[
\int_{t}^{\tau} A_r U_r^2 dr \Bigg]=
O(\tau- T)^{4H+1} + O(\tau- T)^{2H+1}(\tau - t)^{2H} + O(\tau- T)(\tau - t)^{4H}.
$$

{\it Step 4} Let us prove that 
$T_2+T_4=
O(\tau- T)^{4H+1}(\tau-t)^{2H} + O(\tau- T)^{2H+1}(\tau - t)^{4H} + O(\tau- T)(\tau - t)^{6H}
$.
The computations in Step 2 and Step 3 show that
$\Psi_r=
O(\tau-T)^{4H+\frac32} + O(\tau - T)^{\frac32}(\tau - t)^{4H}
$
and 
$\Phi_r=
O(\tau- T)^{4H+2} + O(\tau- T)^{2H+2}(\tau - t)^{2H} + O(\tau- T)^{2}(\tau - t)^{4H}
$.

In addition, 
$U_r= O(\tau - T)^{-\frac12}(\tau - r)^{H-\frac12}(\tau - r\vee T)$
and some computations shows
$$
BS^{-1}\left( \hat{k}_t, \Lambda_r\right)^{(v)}\leq C(\tau-T)^{-\frac32},
$$
and 
$$
BS^{-1}\left( \hat{k}_t, \Lambda_r\right)^{(vi)}\leq C(\tau-T)^{-2},
$$
for some positive constant $C$. Then, direct computations gives us 
$T_2+T_4=
O(\tau- T)^{4H+1}(\tau-t)^{2H} + O(\tau- T)^{2H+1}(\tau - t)^{4H} + O(\tau- T)(\tau - t)^{6H}
$. 

{\it Step 5} Finally, we show $T_1+T_3=O(\tau-t)^{6H+1} + O(\tau - T)^{4H+1}(\tau-t)^{2H} + O(\tau - T)(\tau - t)^{6H}
$. 
Here,
\begin{eqnarray}
D^-\Psi_t&:=&\int_{t}^{\tau}D_t^W ((D^-A)_r U_r )dr\nonumber\\
&=&\int_{t}^{\tau}(D_t^W (D^-A)_r) U_r dr+\int_{t}^{T} (D^-A)_r D_t^W U_r dr,
\end{eqnarray}
 and
\begin{eqnarray}
D^-\Phi_t&:=&\int_{t}^{\tau} D_t^W(A_r U_r^2)dr\nonumber\\
&=&\int_{t}^{\tau}(D_t^W A_r) U_r^2 dr+2\int_{t}^{\tau} U_r A_r D_t^W( U_r) dr,
\end{eqnarray}
where
\begin{eqnarray}
U_s&=& \frac12 E_s\left[G(T,\tau,0,\hat{k}_t,v_T)\int_{s\vee T}^\tau D_s^W\sigma_u^2 du\right]\nonumber\\
&=& 
O(\tau-s\vee T)(\tau-s)^{H-\frac12}(\tau-T)^{-\frac12}\\
U_s^2
&=& 
O(\tau-s\vee T)^2(\tau-s)^{2H-1}(\tau-T)^{-1}
\\
D_t^WU_s&=& \frac12 E_s\Bigg[ 
G(T,\tau,0,\hat{k}_t,v_T)\left(\frac{d_1(\hat{k}_t,v_T)d_2(\hat{k}_t,v_T)}{2v_T^2(\tau-T)} - \frac{1}{2v_T^{2}(\tau-T)}\right)
\left(\int_{s\vee T}^\tau D_s^W\sigma_u^2 du\right)\left(\int_{T}^\tau D_t^W\sigma_u^2 du\right)
\nonumber\\
&&+ G(T,\tau,0,\hat{k}_t,v_T)\left(\int_{s\vee T}^\tau D_t^WD_s^W\sigma_u^2 du\right)\Bigg]\nonumber\\
&=&
O(\tau-s\vee T)(\tau-s)^{H-\frac12}(\tau-t)^{H-\frac12}(\tau-T)^{-\frac12}
\\
D_t^WD_r^WU_s
&=& \frac18 E_s\Bigg[
G(T,\tau,0,\hat{k}_t,v_T)\frac{1}{8v_t^4(\tau-T)^2}
\left(-d_1d_2(1-d_1d_2)-d_1^2-d_2^2 - d_1d_2 + \frac12\right)\nonumber\\
&&\times
\left(\int_{s\vee T}^\tau D_s^W\sigma_u^2 du\right)\left(\int_{r\vee T}^\tau D_t^W\sigma_u^2 du\right)\left(\int_{t\vee T}^\tau D_t^W\sigma_u^2 du\right)\nonumber\\
&&+
2fG(T,\tau,0,\hat{k}_t,v_T)\left(\frac{d_1(\hat{k}_t,v_T)d_2(\hat{k}_t,v_T)}{2v_T^2(\tau-T)} - \frac{1}{2v_T^{2}(\tau-T)}\right)\nonumber\\
&&\times \left(
\left(\int_{s\vee T}^\tau D_t^WD_s^W\sigma_u^2 du\right)\left(\int_{r\vee T}^\tau D_r^W\sigma_u^2 du\right)
+\left(\int_{s\vee T}^\tau D_s^W\sigma_u^2 du\right)\left(\int_{r\vee T}^\tau D_t^WD_r^W\sigma_u^2 du\right)
\right)\nonumber\\
&&+ G(T,\tau,0,\hat{k}_t,v_T)\left(\int_{r\vee s\vee T}^\tau D_t^WD_r^WD_s^W\sigma_u^2 du\right)\Bigg]\nonumber\\
&=&O(\tau-s)^{H-\frac12}(\tau-r)^{H-\frac12}(\tau-t)^{H-\frac12}(\tau-T)^{-\frac32}(\tau-s\vee T)(\tau-r\vee T)\nonumber\\
&&+O(\tau-s)^{H-\frac12}(\tau-r)^{H-\frac12}(\tau-t)^{H-\frac12}(\tau-T)^{-\frac12}(\tau-r\vee s\vee T)\end{eqnarray}
and under (H2'), we get
\begin{eqnarray}
\label{DA}
D_t^W (D^-A)_r&=&\int_r^\tau D_t^W(U_s D_r^WU_s)ds\nonumber\\
&=&2\int_r^\tau D_t^WU_s D_r^WU_s ds + \int_r^\tau U_s(D_t^W D_r^WU_s )ds
\end{eqnarray}
Then, 
\begin{eqnarray}
U_s D_t^WD_r^WU_s
&=& O(\tau-s)^{2H-1}(\tau-r)^{H-\frac12}(\tau-t)^{H-\frac12}(\tau-T)^{-2}(\tau-s\vee T)^2(\tau-r\vee T)\nonumber\\
&&+O(\tau-s)^{2H-1}(\tau-r)^{H-\frac12}(\tau-t)^{H-\frac12}(\tau-T)^{-1}(\tau-s\vee T)\nonumber\\
&&\times(\tau-s\vee r\vee T)\\
\int^\tau_r U_s D_t^WD_r^WU_s ds
&=& O(\tau-r)^{3H-\frac12}(\tau-t)^{H-\frac12}(\tau-T)^{-2}(\tau-r\vee T)^3\nonumber\\
&&+O(\tau-r)^{3H-\frac12}(\tau-t)^{H-\frac12}(\tau-T)^{-1}(\tau-r\vee T)^2\\
D_t^WU_sD_r^WU_s
&=& O(\tau-s)^{2H-1}(\tau-r)^{H-\frac12}(\tau-t)^{H-\frac12}(\tau-T)^{-1}(\tau-s\vee T)(\tau-r\vee T)
\\
\int^\tau_r D_t^WU_s D_r^WU_s ds
&=& O(\tau-r)^{3H-\frac12}(\tau-t)^{H-\frac12}(\tau-T)^{-1}(\tau-r\vee T)^2
\end{eqnarray}
and 
\begin{eqnarray}
D^-\Psi
&=& \int^\tau_t D_t^W(D^-A)_rU_r dr + \int^\tau_t (D^-A)_r D_t^WU_r dr\nonumber\\
&=& O\left(\int^\tau_t (\tau - r)^{4H-1}(\tau - t)^{H-\frac12}(\tau - T)^{-\frac52}(\tau - r\vee T)^4 dr\right)\nonumber\\
&& + O\left(\int^\tau_t (\tau - r)^{4H-1}(\tau - t)^{H-\frac12}(\tau - T)^{-\frac32}(\tau - r\vee T)^3 dr \right)\nonumber\\
&=& O(\tau - T)^{4H+\frac32}(\tau - t)^{H-\frac12} + O(\tau - T)^{\frac32}(\tau - t)^{5H-\frac12}.
\end{eqnarray}
Also,
\begin{eqnarray}
D^-\Phi_t
&=&\int_{t}^{\tau} D_t^W(A_r U_r^2)dr\nonumber\\
&=&\int_{t}^{\tau}(D_t^W A_r) U_r^2 dr+2\int_{t}^{\tau} U_r A_r D_t^W( U_r) dr
\end{eqnarray}
and
\begin{eqnarray}
D_t^WA_r
&=&
\int^\tau_r U_s D_t^W U_s ds\nonumber\\
&=&
O\left(\int^\tau_r (\tau - T)^{-1}(\tau - s)^{2H-1}(\tau-t)^{H-\frac12}(\tau - s\vee T)^{2}ds\right)\nonumber\\
&=&O(\tau - T)^{-1}(\tau - r)^{2H}(\tau-t)^{H-\frac12}(\tau - r\vee T)^{4}
\\
(D_t^WA_r) U_r^2
&=&
O(\tau - T)^{-2}(\tau - r)^{4H-1}(\tau - t)^{H-\frac12}(\tau - r\vee T)^{4} \nonumber\\
&&+ O(\tau - T)^{-2}(\tau - r)^{5H-\frac32}(\tau - r\vee T)^{4}\\
U_r A_r D_t^WU_r 
&=&
O(\tau - T)^{-2}(\tau - r)^{4H-1}(\tau - t)^{H-\frac12}(\tau - r\vee T)^{-2}\\
\int^\tau_t (D_t^WA_r) U_r^2 dr
&=&
O(\tau - T)^{4H+2}(\tau - t)^{H-\frac12}+ O(\tau - T)^{2}(\tau - t)^{5H-\frac12}\\
\int^\tau_t U_r A_r D_t^WU_r dr
&=&
O(\tau - T)^{4H+2}(\tau - t)^{H-\frac12}+ O(\tau - T)^{2}(\tau - t)^{5H-\frac12}
\end{eqnarray}
Thus,
\begin{eqnarray}
D^-\Phi_t
&=&
O(\tau - T)^{4H+2}(\tau - t)^{H-\frac12}+ O(\tau - T)^{2}(\tau - t)^{5H-\frac12}
\end{eqnarray}

Again, for some positive constant $C$, direct computations give us
$$
BS^{-1}\left( \hat{k}_t, \Lambda_r\right)^{(iv)}\leq C(\tau-T)^{-1},
$$
and 
$$
BS^{-1}\left( \hat{k}_t, \Lambda_r\right)^{(v)}\leq C(T-r)^{-\frac32},
$$
Then, we get
$T_1+T_3=
O(\tau-t)^{6H+1} + O(\tau - T)^{4H+1}(\tau-t)^{2H} + O(\tau - T)(\tau - t)^{6H}
$.
\end{proof}

\begin{proof}[Proof of Theorem \ref{themaintheorem}]
The proof follows similar arguments as in the proof of Theorem 4.2 in Al\`os and Shiraya (2019). Notice that Proposition \ref{Theoremcorrelatedcase} gives us that
$$
I( t,T,\tau,\hat{k}_t)  -E_t[v_T]=T_1+T_2,
$$
where
\begin{eqnarray*}
T_1&=&I^{0}( t,T,\tau,0,\hat{k}_t) -E_t[v_T],\\
T_2&=&\frac{\rho }{2}E_t\left[\int_{T}^{\tau}( BS^{-1}) ^{\prime }( \hat{k}_t,\Gamma _{s}) H(s,\tau,X_{s}-X_T,\hat{k}_t,v_{s})\zeta_{s}ds\right].
\end{eqnarray*}
We have seen that in Theorem \ref{uncorrelated} that, if $H<\frac12$, $T_1=o( (\tau-T)^{2H})$. Now, let us study $T_2$. 
Towards this end, we apply the anticipating It\^{o}'s formula (\ref{aito}) to the process
\[
H(s,\tau,X_{s}-X_T,\hat{k}_t,v_{s})J_{s},
\]%
where $J_{s}=\int_{s}^{\tau}( BS^{-1}) ^{\prime }( \hat{k}_t,\Gamma _{u})\zeta_{u}du$. Then,
taking conditional expectations we get
\begin{eqnarray*}
0 &=&E_{t}\Bigg[ H(T,\tau,0,\hat{k}_t,v_{t})J_{t}  \\
&&+\int^\tau_T H(s,\tau,X_{s}-X_T,\hat{k}_t,v_{s}) dJ_{s}\\
&&+\int^\tau_T\frac{\partial^2}{\partial x \partial \sigma} H(s,\tau,X_{s}-X_T,\hat{k}_t,v_{s}) J_{s} \frac{\partial v}{\partial y} (D^W_s Y_s) \sigma_s ds
\\
&&+\int^\tau_T \frac{\partial}{\partial x} H(s,\tau,X_{s}-X_T,\hat{k}_t,v_{s}) (D^W_s J_s) \sigma_s ds \\
&&+\int_{T}^{\tau} \frac{\partial}{\partial t} H(s,\tau,X_{s}-X_T,\hat{k}_t,v_{s})J_{s} ds \\
&&+\int_{T}^{\tau}\frac{\partial}{\partial \sigma} H(s,\tau,X_{s}-X_T,\hat{k}_t,v_{s})\frac{\partial v}{\partial t} J_{s}ds  \\
&&+\int_{T}^{\tau} \frac{\partial}{\partial \sigma} H(s,\tau,X_{s}-X_T,\hat{k}_t,v_{s})\frac{\partial v}{\partial y} J_{s}dY_s \\
&&+\int_{T}^{\tau}\frac{\partial}{\partial x} H(s,\tau,X_{s}-X_T,\hat{k}_t,v_{s}) J_{s} dX_s \\
&&+\frac{1}{2}\int_{T}^{\tau} \frac{\partial^2}{\partial x^2} H(s,\tau,X_{s}-X_T,\hat{k}_t,v_{s}) J_{s} d\langle X\rangle_s \Bigg]. 
\end{eqnarray*}
Now, using  the relationships
\begin{eqnarray*}
&&\frac{1}{\sigma(\tau-s)}\frac{\partial}{\partial \sigma}BS(s,\tau,x,k,\sigma)=\left(\frac{\partial^2}{\partial x^2} - \frac{\partial}{\partial x}\right)BS(s,\tau,x,k,\sigma),\\
&&\left(\frac{\partial}{\partial t} + \frac{1}{2}\sigma^2\frac{\partial^2}{\partial x^2}  - \frac{1}{2}\sigma^2 \frac{\partial}{\partial x} \right)BS(\tau,s,x,k,\sigma)=0,\\
&&D^W_s J_s = \rho \int_{s}^{\tau}( BS^{-1}) ^{\prime }( \hat{k}_t,\Gamma _{r})D^W_s \zeta_{r}dr,\\
&&D^W_s Y_s = \rho \int^\tau_s D^W_s \sigma^2_r dr,
\end{eqnarray*}
we obtain
\begin{eqnarray*}
0&=&E_{t}\Bigg[ H(T,\tau,0,\hat{k}_t,v_{t})J_{t}  \\
&&-\int_{T}^{\tau}H(s,\tau,X_{s}-X_T,\hat{k}_t,v_{s})( BS^{-1}) ^{\prime
}( X_t,\Gamma _{s}) \zeta _{s}ds \\
&& 
+ \frac{\rho}{2} \int_{T}^{\tau}\left(\frac{\partial^3}{\partial x^3} - \frac{\partial^2}{\partial x^2} \right) H(s,\tau,X_{s}-X_T,\hat{k}_t,v_{s})J_{s}\zeta_s ds
\\
&&+ \rho \int_{T}^{\tau}\frac{\partial }{\partial x}H(s,\tau,X_{s}-X_T,\hat{k}_t,v_{s})\left( \int_{s}^{\tau}( BS^{-1}) ^{\prime }( \hat{k}_t,\Gamma _{r}) ( D_{s}^{W}\zeta _{r}) dr\right) \sigma _{s}ds
\Bigg],
\end{eqnarray*}
which implies that
\begin{eqnarray*}
T_{2} &=&E_{t}\Bigg[ \frac{\rho }{2} H(T,\tau,0,\hat{k}_t,v_{t})J_{t}\\
&&
+ \frac{\rho^2}{4}\int_{T}^{\tau} \left(\frac{\partial^3}{\partial x^3} - \frac{\partial^2}{\partial x^2} \right) H(s,\tau,X_{s}-X_T,\hat{k}_t,v_{s})J_{s}\zeta_s ds
\\
&&
+ \frac{\rho^2}{2}\int_{T}^{\tau}\frac{\partial }{\partial x}H(s,\tau,X_{s}-X_T,\hat{k}_t,v_{s})\left( \int_{s}^{\tau}( BS^{-1}) ^{\prime }(\hat{k}_t,\Gamma _{r}) ( D_{s}^{W}\zeta _{r}) dr\right) \sigma _{s}ds\Bigg]\\
&=&T_{2}^{1}+T_{2}^{2}+T_{2}^{3}.
\end{eqnarray*}
Now, the study of $T_2$ is decomposed into two steps.

\textit{Step 1 } As
\begin{eqnarray*}
H(T,\tau,0,\hat{k}_t,v_{t})
&=&\frac{e^{X_{t}}N^{\prime }(d_1(
\hat{k}_t,v_t) )}{v_t\sqrt{\tau-T}}\left( 1-\frac{d_1( \hat{k}_t,v_t) 
}{v_t\sqrt{\tau-T}}\right)\\
&=&\frac{e^{X_{t}}N^{\prime }(d_1(
\hat{k}_t,v_t) )}{2v_t^3}\left( (I_t(t,T,\tau, \hat{k}_t))^2-v_t^2\right).
\end{eqnarray*}
we have that
\begin{eqnarray}
\label{T21}
\lefteqn{\lim_{\tau\rightarrow T}\frac{T_{2}^{1}}{(\tau-T)^{2H}}}\nonumber\\
&\le&\lim_{T\to t}\frac{\rho}{2(\tau-T)^{2H}} E_{t}\Bigg[ \frac{e^{X_{t}}N^{\prime }(d_1(
\hat{k}_t,v_t) )}{2v_t^3\sqrt{\tau-T}}\left( (I_t(t,T,\tau,\hat{k}_t))^2-v_t^2\right)\nonumber\\
&&\times\int_{T}^{\tau}\frac{1}{e^{X_{t}}N^{\prime }(d_{+}\left( \hat{k}_t,BS^{-1}(\hat{k}_t,\Gamma _{s})\right)) \sqrt{\tau-T}}\zeta _{s}ds \Bigg].
\end{eqnarray}%
Then, a direct computation give us that $\lim_{\tau\rightarrow T}\frac{T_{2}^{1}}{(\tau-T)^{2H}}=0$.

\textit{Step 2}. In order to see that $T_2^2$ and $T_2^3$ are $O(\tau-T)^{2H}$ we apply again the anticipating It\^{o}'s formula to the processes%
\begin{eqnarray*}
\left( \frac{\partial ^{3}}{\partial x^{3}}-%
\frac{\partial^2 }{\partial x^2}\right) H(s,\tau,X_{s}-X_T,\hat{k}_t,v_{s})Z_{s},
\end{eqnarray*}
and%
\[
\frac{\partial H}{\partial x}(s,\tau,X_{s} -X_T,\hat{k}_t,v_{s})R_{s},
\]%
where 
\begin{eqnarray*}
Z_{s}&:=&\int_{s}^{\tau} \zeta_u J_{u} du,\\
R_{s}&:=&\int_{s}^{\tau} \left(\int_{u}^{\tau}(BS^{-1}) ^{\prime }(  \hat{k}_t,\Gamma _{r}) (D_{s}^{W}\zeta_{r}) dr\right) \sigma _{u}du.
\end{eqnarray*}
Then we get%
\begin{eqnarray}
T_{2}^{2}
&=&\frac{\rho^2}{4}E_{t}\Bigg[ \left( \frac{\partial ^{3}}{\partial x^{3}}-\frac{\partial^2 }{\partial x^2}\right) 
H(T,\tau,0, \hat{k}_t,v_{t})Z_t
\nonumber \\
&&+ \frac{\rho}{2}\int_{T}^{\tau}\left( \frac{\partial^{3}}{\partial x^{3}}-
\frac{\partial^2 }{\partial x^2}\right)^2 H(s,\tau,X_{s}-X_T,\hat{k}_t,v_{s}) Z_{s}\zeta_s ds  \nonumber \\
&&+\rho \int_{T}^{\tau}\frac{\partial}{\partial x}\left( \frac{\partial ^{3}}{\partial x^{3}}-
\frac{\partial^2 }{\partial x^2}\right) 
H(s,\tau,X_{s}-X_T,\hat{k}_t,v_{s})(D_{s}^{W} Z_s ) \sigma_s ds\Bigg], \label{T22}
\end{eqnarray}%
and 
\begin{eqnarray}
T_{2}^{3} &=&
\frac{\rho^2}{2}E_{t}\Bigg[\frac{\partial H}{\partial x}%
(T,\tau,0,\hat{k}_t,v_{t})R_t \nonumber \\
&&+\frac{\rho }{2}\int_{T}^{\tau}\left( \frac{\partial ^{3}}{\partial x^{3}}-%
\frac{\partial^2 }{\partial x^2}\right) \frac{\partial H}{\partial x}%
(s,\tau,X_{s},\hat{k}_t,v_{s}) R_{s} \zeta_{s}ds  \nonumber \\
&&+\rho \int_{T}^{\tau}\frac{\partial ^{2}H}{\partial x^{2}}(s,\tau,X_{s},\hat{k}_t,v_{s})\nonumber\\
&&\hspace{0.5cm}\times\left(\int_{s}^{\tau}\int_{r}^{\tau}\left( 
BS^{-1}\right) ^{\prime }(  \hat{k}_t,\Gamma_{u}) (
D_{s}^{W}D_{r}^{W}\zeta_{u}) dudr\right) \sigma_{s} ds\Bigg].  \label{T23}
\end{eqnarray}%
Lemma 4.1 in Al\`{o}s, Le\'{o}n and Vives (2007) gives us that the last two terms
in (\ref{T22}) and (\ref{T23}) are $O(\nu^3(\tau-T)^{3H})$. 
Now, as

\begin{eqnarray*}
\lefteqn{\left|\left( \frac{\partial ^{3}}{\partial x^{3}}-\frac{\partial^2 }{\partial x^2}\right) H(s,\tau,0,\hat{k}_t,v_{t})\right|}\\
&=& 
\Bigg|- \frac{d_1\left( \hat{k}_t,v_t\right)}{v_t \sqrt{\tau-s}} 
\frac{e^{X_{t}}N^{\prime }(d_1\left(\hat{k}_t,v_t\right) )}{v_t\sqrt{\tau-s}}
\left( 1-\frac{d_1\left( \hat{k}_t,v_t\right) }{v_t\sqrt{\tau-s}}\right)^3
\nonumber\\
&&- \left(3 - \frac{d_1\left( \hat{k}_t,v_t\right)}{v_t\sqrt{\tau-s}} \right)
\frac{e^{X_{t}}N^{\prime }(d_1\left(\hat{k}_t,v_t\right) )}{(v_t\sqrt{\tau-s})^3}
\left( 1-\frac{d_1\left( \hat{k}_t,v_t\right) }{v_t\sqrt{\tau-s}}\right)
+ 3\frac{e^{X_{t}}N^{\prime }(d_1\left(\hat{k}_t,v_t\right) )}{(v_t\sqrt{\tau-Ts})^5}\Bigg|\\
&=& 
\frac{3e^{X_{t}}N^{\prime }(d_1\left(\hat{k}_t,v_t\right) )}{v_t^5}(\tau-s)^{-\frac{5}{2}}
+O(\tau-T)^{-\frac{3}{2}},
\end{eqnarray*}
and
\begin{eqnarray*}
\lefteqn{\left|\frac{\partial H}{\partial x}(T,\tau,X_t,\hat{k}_t,v_{t})\right|}\\
&=&\left|\frac{e^{X_{t}}N^{\prime }(d_{1}\left(
\hat{k}_t,v_t\right) )}{v_t\sqrt{\tau-T}}\left( 1-\frac{d_{1}\left( \hat{k}_t,v_t\right) 
}{v_t\sqrt{\tau-T}}\right)^2 
- \frac{e^{X_{t}}N^{\prime }(d_{1}\left(\hat{k}_t,v_t\right) )}{(v_t\sqrt{\tau-T})^3}\right|\\
&=&
\frac{e^{X_{t}}N^{\prime }(d_1\left(\hat{k}_t,v_t\right) )}{v_t^3}(\tau-T)^{-\frac{3}{2}}
+O((\tau-T)^{-\frac{1}{2}}.
\end{eqnarray*}
\begin{eqnarray}
\label{LT22}
\lefteqn{\lim_{\tau\to T}{T_{2}^{2}} }\nonumber\\
&=& \frac{\rho^2}{4}E_{t}\left[ \left( \frac{\partial ^{3}}{\partial x^{3}}-\frac{\partial^2 }{\partial x^2}\right)H(T,\tau,0,\hat{k}_t,v_{t}) Z_t\right]\nonumber\\
&=&\frac{\rho^2}{4(\tau-T)^3}E_{t}\Bigg[ 3\frac{e^{X_{t}}N'(d_1(\hat{k}_t,v_t)) }{v_{t}^{5}}\nonumber\\
&& \times \int_{T}^{\tau} \sigma_s \left(\int_{s}^{\tau}D_s^{W}\sigma _{r}^{2}dr \right) \left(\int_{s}^{\tau}\frac{\zeta_{r}}{e^{X_{t}}N^{\prime }(d_1( \hat{k}_t,BS^{-1}(\hat{k}_t,\Gamma _{r}))) }dr\right) ds \Bigg]\nonumber\\
&=&\lim_{\tau\to T}\frac{3\rho ^{2}}{4\sigma_{t}^{5}(\tau-T)^{3}}E_{t}\left[ \int_{T}^{\tau}\left( \int_{s}^{\tau}D_{s}^{W}\sigma _{r}^{2}dr\right) \left(\int_{s}^{\tau}\zeta _{r}dr\right) \sigma _{s}ds\right]\nonumber\\
&=&\lim_{\tau\to T}\frac{3\rho ^{2}}{4\sigma _{t}^{5}(\tau-T)^{3}}
E_{t}\left[\int_{T}^{\tau}\left( \int_{s}^{\tau}D_{s}^{W}\sigma _{r}^{2}dr\right) \left(\int_{s}^{\tau}\sigma _{r} \int_{r}^{\tau}D_{r}^{W}\sigma _{\theta}^{2}d\theta  dr\right) \sigma _{s}ds \right] \nonumber\\
&=&\lim_{\tau\to T}\frac{3\rho ^{2}}{4\sigma _{t}^{3}(\tau-T)^{3}}E_{t}\left[\int_{T}^{\tau}\left(\int_{s}^{\tau}D_{s}^{W}\sigma _{r}^{2}dr\right) \left( \int_{s}^{\tau}\int_{r}^{\tau}D_{r}^{W}\sigma _{\theta }^{2}d\theta dr\right) ds\right]\nonumber\\
&=&\lim_{\tau\to T}\frac{3\rho ^{2}}{8\sigma _{t}^{3}(\tau-T)^{3}}E_{t}\left[\left( \int_{T}^{\tau}\int_{s}^{\tau}D_{s}^{W}\sigma _{r}^{2} drds\right) ^{2}\right], 
\end{eqnarray}%
and 
\begin{eqnarray}
\label{TL23}
\lefteqn{\lim_{\tau\to T}{T_{2}^{3}}}\nonumber\\
&=&\lim_{\tau\to T}\frac{\rho^2}{{2}}E_{t}\left[\frac{\partial H}{\partial x}(T,\tau,X_t,\hat{k}_t,v_{t})R_t\right]\nonumber\\
&=&\lim_{\tau\to T}\frac{\rho^2}{{2}}E_{t}\Bigg[
\frac{1}{4}\frac{e^{X_{t}}N^{\prime }(d_1( \hat{k}_t,v_t) )}{( v_t \sqrt{\tau-T})^{3}}\left( v_t^{2}(\tau-T)-4\right)\nonumber\\
&&\times\int_{T}^{\tau} \int_{s}^{\tau}\frac{1}{e^{X_{r}}N^{\prime }(d_1(  \hat{k}_t,BS^{-1}(\hat{k}_t,\Gamma _{r}))) \sqrt{\tau-T}} \left(D_{s}^{W}\left(\sigma_r \int_{r}^{\tau}D_s^{W}\sigma _{u}^{2}du\right)\right) dr \sigma _{s}ds \Bigg]\nonumber\\
&=&-\lim_{\tau\to T}\frac{\rho^{2}}{2\sigma _{t}^{2}(\tau-T)^{2}}
E_{t}\Bigg[ \int_{T}^{\tau} \int_{s}^{\tau}D_{s}^{W}\sigma_{r} \int_{r}^{\tau}D_{r}^{W}\sigma _{u}^{2}du dr ds \nonumber\\
&& +\int_{T}^{\tau} \int_{s}^{\tau} \sigma_{r}\int_{r}^{\tau}D_{s}^{W}D_{r}^{W}\sigma _{u}^{2}du dr ds \Bigg]\nonumber \\
&=&-\lim_{\tau\to T}\frac{\rho ^{2}}{{2\sigma _{t}(\tau-T)^2}}E_{t}\left[ \int_{T}^{\tau}\left( \int_{s}^{\tau} D_{s}^{W}\sigma_{r} dr\right) ^{2}ds\right] \nonumber\\
&&-\lim_{\tau\to T}\frac{\rho^{2}}{2\sigma_{t}(\tau-T)^{2}}E_{t}\left[ \int_{T}^{\tau} \int_{s}^{\tau} \int_{r}^{\tau}D_{s}^{W}D_{r}^{W}\sigma _{u}^{2}du dr ds\right],
\end{eqnarray}
Let us now summarize the previous computations. We have seen that
\begin{eqnarray}
I(t,T,\tau, X_t,\hat{k}_t) -E_t[v_t]&=&T_1+T_2\nonumber\\
&=&T_1+T_2^{1,1}+T_2^{1,2}+T_2^2+T_2^3
\end{eqnarray}
where $$T_1+T_2^{1,2}=o(\tau-T)^{2H},$$
\begin{eqnarray*}
T_2^{1,1}&=&(I(t,T,\tau,X_t,\hat{k}_t) - E_t[v_t])\frac{\rho}{4\sigma_t^2(\tau-T)}
E_t \left[ (I(t,T,\tau, X_t,\hat{k}_t)+v_t)\int_{T}^{\tau} \int_{s}^{\tau}D_{s}^{W}\sigma _{r}^{2}dr ds\right],\nonumber\\
T_2^2&=&\frac{3\rho ^{2}}{8\sigma _{t}^{3}(\tau-T)^{3}}E_{t}\left[\left( \int_{T}^{\tau}\int_{s}^{\tau}D_{s}^{W}\sigma _{r}^{2} drds\right) ^{2}\right]+o(\tau-T)^{2H},
\end{eqnarray*}
and
\begin{eqnarray}
T_2^3&=&-\lim_{\tau\to T}\frac{\rho ^{2}}{{2\sigma _{t}(\tau-T)^2 }}E_{t}\left[ \int_{T}^{\tau}\left( \int_{s}^{\tau} D_{s}^{W}\sigma_{r} dr\right) ^{2}ds\right] \nonumber\\
&&-\lim_{\tau\to T}\frac{\rho^{2}}{2\sigma_{t}(\tau-T)^{2 }}E_{t}\left[ \int_{T}^{\tau} \int_{s}^{\tau} \int_{r}^{\tau}D_{s}^{W}D_{r}^{W}\sigma _{u}^{2}du dr ds\right]\nonumber\\
&&+o(\tau-T)^{2H}.
\end{eqnarray}
Then, as there is some $\epsilon$ such that, if $\tau-T<\epsilon$ 
$$
\left|\frac{\rho}{4\sigma_t^2(\tau-T)}
E_t \left( (I_t(t,T,\tau,\hat{k}_t)+v_t)\int_{T}^{\tau} \int_{s}^{\tau}D_{s}^{W}\sigma _{r}^{2}dr ds\right)\right|<1
$$
we can write
\begin{eqnarray}
\lim_{\tau\to T} \frac{I(t,T,\tau, X_t,\hat{k}_t) -E_t[v_t]}{(\tau-T)^{2H}}
&=&\lim_{\tau\to T}\frac{1}{(\tau-T)^{2H}}\frac{T_1+T_2^{1,2}+T_2^2+T_3^3}{1-\frac{\rho}{4\sigma_t^2(\tau-T)}
E_t \left( (I_t(t,T,\tau,\hat{k}_t)+v_t)\int_{T}^{\tau} \int_{s}^{\tau} D_{s}^{W}\sigma _{r}^{2}dr ds\right)}\nonumber\\
&=&\lim_{\tau\to T}\frac{3\rho ^{2}}{8\sigma _{t}^{3}(\tau-T)^{3+2H}}E_{t}\left[\left( \int_{T}^{\tau} \int_{s}^{\tau} D_{s}^{W}\sigma _{r}^{2} drds\right) ^{2}\right]\nonumber\\
&&-\lim_{\tau\to T}\frac{\rho^{2}}{2\sigma_{t}(\tau-T)^{2+2H }}E_{t}\left[ \int_{T}^{\tau} \int_{s}^{\tau} \int_{r}^{\tau}D_{s}^{W}D_{r}^{W}\sigma _{u}^{2}du dr ds\right]\nonumber\\
&&-\lim_{\tau\to T}\frac{\rho^{2}}{{2\sigma_{t}(\tau-T)^{2+2H }}}E_{t}\left[ \int_{T}^{\tau} \int_{s}^{\tau} \int_{r}^{\tau}D_{s}^{W}D_{r}^{W}\sigma _{u}^{2}du dr ds\right],
\end{eqnarray}
as we wanted to prove.
\end{proof}

\end{document}